\newtheorem{lemma}{Lemma}[section]
\newtheorem{theorem}{Theorem}[section]
\newtheorem{corollary}{Corollary}[section]
\definecolor{ao}{rgb}{0.0, 0.5, 0.0}
\definecolor{darklavender}{rgb}{0.45, 0.31, 0.59}
\newtheorem*{rep@theorem}{\rep@title}
\newcommand{\newreptheorem}[2]{%
\newenvironment{rep#1}[1]{%
 \def\rep@title{#2 \ref{##1}}%
 \begin{rep@theorem}}%
 {\end{rep@theorem}}}
\newcommand\footnoteref[1]{\protected@xdef\@thefnmark{\ref{#1}}\@footnotemark}
	\newcommand{\blind}{0}
\begin{document}
		
		\def\spacingset#1{\renewcommand{\baselinestretch}%
			{#1}\small\normalsize} \spacingset{1}

		\if0\blind
		{
			
			\title{Unequal Opportunities in Multi-hop Referral Programs}
			
	\author[1]{Yiguang Zhang}
	\author[1]{Augustin Chaintreau}
	\affil[1]{Columbia University}

			\date{}
			\maketitle
		} \fi
		
		\if1\blind
		{

            \title{\bf \emph{IISE Transactions} \LaTeX \ Template}
			\author{Author information is purposely removed for double-blind review}
			
\bigskip
			\bigskip
			\bigskip
			\begin{center}
				{\LARGE\bf \emph{IISE Transactions} \LaTeX \ Template}
			\end{center}
			\medskip
		} \fi
		\bigskip

\begin{abstract}
As modern social networks allow for faster and broader interactions with friends and acquaintances, online referral programs that promote sales through existing users are becoming increasingly popular. Because it is all too common that online networks reproduce historical structural bias, members of disadvantaged groups often benefit less from such referral opportunities. For instance, \emph{one-hop referral} programs that distribute rewards only among pairs of friends or followers may offer less rewards and opportunities to minorities in networks where it was proved that their degrees is statistically smaller. Here, we examine the fairness of general referral programs, increasingly popular forms of marketing in which an existing referrer is encouraged to initiate the recruitment of new referred users over multiple hops. While this clearly expands opportunities for rewards, it remains unclear whether it helps addressing fairness concerns, or make them worse. We show, from studying 4 real-world networks and performing theoretical analysis on networks created with minority-majority affiliations and homophily, that the change of bias in multi-hop referral programs highly depends on the network structures and the referral strategies. Specifically, under three different \emph{constrained referral strategies} which limit the number of referrals each person can share to a fixed number, we show that even with no explicit intention to discriminate and without access to sensitive attributes such as gender and race, certain referral strategies can still amplify the structural biases further when higher hops are allowed. Moreover, when there is no constraint on the number of referrals each person can distribute and when the effect of referral strategies is removed (that is, a person's reward is linear to this person's degree in the network), we prove a precise condition under which the bias in 1-hop referral programs is amplified in higher-hop referral programs, showing that such programs are particularly concerning when the two groups have small population difference or not enough cross-group interactions. 
\end{abstract}
\maketitle
\pagestyle{plain}
\thispagestyle{empty}
\section{Introduction}
Referral programs are the marketing strategies taken by a business to motivate individual customers to share a product or service with others, and they are generally limited to (online) social networks, because effective referrals are often created by informal communications among friends\footnote{It has been shown that recommendations from friends is the most credible form of advertising among customers: https://www.nielsen.com/us/en/press-releases/2015/recommendations-from-friends-remain-most-credible-form-of-advertising/.}.
Being advocated on social networks, a user's position in a social network plays an important role on the user's access to opportunities and commercial transactions, and thus arouses concerns about the potential of reproducing bias against certain disadvantaged groups. This worry is not groundless, as existing literatures \cite{avin2015homophily} have shown that when a network contains a majority group and a minority group, the population bias against the minority group is amplified when considering the number of friends a user has. These imply that the 1-hop referral programs many online commerce companies (such as Airbnb, Uber, etc) and employee recruiting systems currently employ can reinforce the population disparity. However, much less is known about higher-hop referral programs\footnote{FantasyDraft is an example of higher-hop referral programs: https://www.referralcandy.com/blog/gaming-referral-program-examples-fantasydrafts-multi-tiered-referral-program/.}, which can result in a larger expected reach \cite{abbassi2011multi}. Having a complete characterization of social disparities in higher-hop referral programs helps tackle questions including at what point a minority group starts experiencing a systemic reproduction of disadvantage, and under which conditions -- if any -- can bias in lower hops be alleviated in higher hops.

In order to talk about these questions, we have to agree on a gaining scheme. There are numerous ways to consider the social network that serves as the foundation for the referral process. As a result, different gaining schemes are imposed by these diverse methods. For example, a referral program may only keep track of the first user who sends the referral to Alex; alternatively, the program may keep track of all users who made referrals to Alex. In this paper, we take the most straightforward and widely-used approach that considers all the referrers for each referee, as it allows us to study not only the \emph{active gain} for the referrers who benefit from getting commercial rewards, but also the \emph{passive gain} for the referees who benefit from accessing to information. The referrer-referee relationships then make up an induced graph which we refer as the \emph{referral subgraph} of the network. We define the \emph{direct (active or passive) gain} of a user $u$ in the network as the referral gain generated because of the direct neighbors of $u$, and the \emph{k-hop (active or passive) gain} (also \emph{indirect gain}) as the gains generated due to neighbors that are distance $k$ from $u$ in the referral forrest, and we define the direct gain and $k$-hop gain of a group in the network as the sum of individual direct gain and $k$-hop gain among all users belong to the group, respectively. As allowing a large number of hops has prompted increasingly illegal forms of multi-level marketing\footnote{
We acknowledge that the multi-hop referral programs also fell into the categories of multi-hop referral programs and multi-level marketing, and their legalities vary in different countries. The current paper does not take legal issues into consideration.}, such as \emph{pyramid schemes}, our work solely considers a fixed finite number of hops in the referral programs. 

We first attempt to understand the change of bias in higher-hop referral programs, among with four real-world datasets (Instagram, DBLP, Twitch, and Deezer). We empirically observe that leveraging higher-hop in referral gain strategies can amplify bias or, more interestingly, contribute to fairness. It primarily depends on which strategy is deployed and whether active or passive gains are considered. For constrained strategies where individual gains are uniformly bounded, networks with varying degrees of inherent biases and homophily agree. However, when it comes to unconstrained strategies, for which gains are effectively only bounded by the size of an individual's networks, the picture is more complex. Not only do we show that it varies with different networks, we prove mathematically sufficient and necessary conditions determining when higher hops can be effectively used to mitigate unfairness and improve referral schemes. 

The question we ask in this paper addresses the conditions at which multi-hop referral programs alleviate or reinforce the disparity in 1-hop referral programs. In particular, how can different referral strategies affect the bias, and how do social networks play a role in leading the direction of bias? To answer these questions, we introduce four referral programs, and perform mathematical analysis in a homophilous social network model that contains a majority group and a minority group. We show, under three \emph{constrained referral programs} where the number of referrals each person can share is limited, that even with no explicit intention to discriminate, 2-hop referral programs can consistently change the bias found in 1-hop, regardless of the network structure. We also show, under an \emph{unconstrained referral program} where the gains of the referral completely depend on the network structure, that the bias found in 1-hop can be amplified in higher-hop, when there is not large enough population difference between the majority group and the minority group, or when the two group have not enough interactions, and we present a precise threshold that determines the direction of bias change from 1-hop gains to k-hop gains. Finally, we extend some of our analysis for active gains to passive gains.

As a summary, we present the following contributions:
\begin{enumerate}
	\item We study real-world social networks that exhibit structural bias against specific minority groups (e.g., along gender lines or languages used) and illustrates how they relate to unequal access to referral opportunities. Most importantly, when compared with bias found in single-hop referral, we prove that bias in multi-hop referrals may be either amplified or mitigated. As it is determined by multiple factors, including a measure of homophily, our results motivates more in depth analysis (section 3).
	\item We perform mathematical analysis using homophilic network models, and we show that the bias found in 1-hop active gains can be alleviated in higher hops, even there is no intentional bias correction in referral strategies. We then provide a precise threshold condition under which 2-hop active gains reduces the bias found in 1-hop active gains. The threshold depends on the population ratio, as well as the level of the homophily in the network. We find that when the population ratio is extreme and the network is not extremely homophilic, the bias in direct active gains can be alleviated in indirect active gains (section 4).
	\item Our analysis extends to informing how to ensure fairer access to referral opportunities also for passive gains. We first prove that the case of unconstrained referral behaves similarly and presents findings for other strategies. Other extensions omitted for space reasons include considering constrained referral strategies using any number of hops $k$ greater than 2, or providing equal access to commercial opportunities among more than two groups of users. As for the result presented above, we show that extension to multi-hop are promising to mitigate unfairness but must be deployed after a careful test of the network conditions (section 5).
\end{enumerate}

This study has important ethical implications beyond referral schemes, as ensuring equitable access to information and opportunities for different demographic groups is a core ideal in any civil society. While some individual disparities, created by the network each one forms and maintains, are unavoidable, structural bias that simply results from a group size and homophilic tendencies should be minimized. As online tools stimulate information exchanges, enabling longer pathways to exchange opportunities, one is left wondering whether to celebrate that more outreach brings more equitable access, or on the contrary continue at least for some time to perpetuate long-standing differences. The answer, as our analysis proves, is that either of those viewpoints can be correct, depending on the size and homophily of various groups. Our results motivates to design network-aware referral choice that are able to deploy opportunities judiciously to leverage hop count in ways that reconcile efficiency and fairness.

\section{Related Work}
The diffusion of beliefs and conversations in social networks has been widely studied \cite{granovetter1978threshold}\cite{valente1996social}. It is also widely-considered in various virtual marketing problems, such as influence maximization \cite{domingos2001mining}\cite{kempe2003maximizing}, or revenue maximization \cite{hartline2008optimal}\cite{akhlaghpour2010optimal}\cite{abbassi2011multi}\cite{emek2011mechanisms}. These problems primarily focus on the mechanisms that maximize the benefit of the sale companies, while much less is known for how such strategies would impact the disparities in networks with more than one demographic groups.

In the meantime, the fast development of online services has risen the concern of exacerbating prejudices against historically under-represented groups, and a growing body of literatures has shown that algorithms run on big data can reproduce the prejudices even without explicit intention. Existing literatures that show evidence of bias on this issue center in the area of machine learning, including in binary classification tasks  \cite{chouldechova2017fair}, word embeddings \cite{bolukbasi2016man}, and ranking algorithms \cite{yang2017measuring}. Recent work on social networks also show that recommendation algorithms applied on graphs can exacerbate the bias in the network \cite{stoica2018algorithmic} through a \emph{homophilic} network model. 

Homophily, the tendency of people seeking out to those who are similar to themselves, has been widely observed and studied \cite{lazarsfeld1954friendship}\cite{mcpherson2001birds}. A generative model that combines homophily with rich-get-richer dynamics shows that that homophilous networks can naturally exacerbate the population bias \cite{avin2015homophily}, has later been extended in a series of papers \cite{stoica2018algorithmic}\cite{zhang2021chasm} for better understanding the impact of network structural bias. To our best knowledge, all of the existing work studying the structural bias in social networks solely considers the users' direct network influence; that is, only the first-hop neighbors are considered as the user's network ``social capital''. However, the higher-hop neighbors (indirect influence) can make a big difference in the revenue received by a user through multi-hop referral programs. This paper is the first to study network bias that considers the indirect influence.

\section{Empirical results}
Previous work analyzing structural bias in algorithms applied to social networks primarily focuses on whether the population bias can be amplified when the direct neighborhoods of nodes in the network is considered \cite{avin2015homophily}\cite{stoica2018algorithmic}. Under the schemes of referral programs, we show, among with four network datasets, that considering solely the direct neighbors is not sufficient to conclude an algorithm's impact on a social network's bias in long run, as direction of bias change from 1-hop to 2-hop can be completely opposite to the direction of the bias change from the population ratio to the 1-hop ratio.

\subsection{Datasets}
\subsubsection{Deezer \cite{rozemberczki2020characteristic}}Deezer is a music streaming service. The Deezer dataset was collected from the public API in March 2020. It consists 28,281 nodes that are users from European countries and 92,752 edges that represents the mutual follower relationships between the users. Each node is labelled by the user's gender. Males constitute 44.3\% of the dataset, and are considered as the minority in the network.

\subsubsection{DBLP \cite{stoica2018algorithmic}\cite{ley2009dblp}} The dblp computer science bibliography is a free online database of key computer science journals and proceedings. This dataset contains 102, 263 nodes labelled by gender, and 199, 679 edges. Females constitute 19.3\% of the dataset, and are the minorities in the network.

\subsubsection{Instagram \cite{stoica2018algorithmic}}
Instagram is a photo and video sharing social networking service, where users can follow each other. The dataset contains a total of 553,628 different users whose genders were inferred from their names, and 652,830 edges representing the connections. Females make up 54.4\% of all users in this dataset. Despite the fact that females account for more than half of the data, women are still regarded the disadvantaged for two reasons: (1) other aspects of the network, such as the degree distribution, show a bias against female users; and (2) we wish to remain consistent with previous work using this dataset.

\subsubsection{Twitch \cite{rozemberczki2021twitch}} Twitch is a video live streaming service based in the United States that concentrates on video games. The dataset contains 168,114 nodes that are labelled by the languages used by the users, and 6,797,557 edges representing the mutual follower relationships between them. We focus on the two most frequently used languages, English (majority) and German (minority), and use the sub-graph induced by nodes with the two labels, with 133,610 nodes and 5,762,982 edges.

\subsection{Referral strategies}

Let $\mathcal{G}$ be a social network with a group of minority (referred as ''red'' or $R$) and a group of majority (referred as ''blue'' or $B$). Let $\mathcal{S}$ be a referral strategy applied by members in $\mathcal{G}$. For a $K$-hop referral program with referral strategy $\mathcal{S}$, we denote the \emph{active gain} user $u$ receives from sending $k$-hop referrals actively as $AG_{\mathcal{S}}^{(k)}(u)$, and the \emph{passive gain} $u$ receives from receiving referrals from the $k$th hop as $PG_{\mathcal{S}}^{(k)}(u)$. We assume that the gain (both active and passive) obtained from one successful referral is the same for all referrer-referee pairs. We further denote $AG_{\mathcal{S}}^{(k)}(R)$ ($AG_{\mathcal{S}}^{(k)}(B)$) as the sum of active gains over nodes in red (blue), and $PG_{\mathcal{S}}^{(k)}(R)$ ($PG_{\mathcal{S}}^{(k)}(B)$) as the sum of passive gains over nodes in red (blue). Following conventional graph notations, we denote the number of nodes that are of distance $k$ from $u$ in $\mathcal{G}$ and $\mathcal{S}$ as $deg_{\mathcal{G}}^{(k)}(u)$ and $deg_{\mathcal{S}}^{(k)}(u)$, respectively.

Note that both $AG_{\mathcal{S}}^{(k)}(u)$ and $PG_{\mathcal{S}}^{(k)}(u)$ depend on the referral strategies, as well as the number of successful $k$-hop referrals, which can be impacted by the credibilities of in-between referrers, the price of the product, and the number of $k$-hop friends $u$ has. These factors can be classified into three categories: (1) referral strategy factors, for which referrers select referees according to certain strategies to maximize their rewards, (2) network structure factors, for which a person's position in the network sets caps on the gains this person can get, and (3) referee's decision factor, for which each referral recipient may accept or decline the offer base on their own needs or other criteria. As the third one largely depends on elements that are not closely related to social networks, we focus on the prior two factors. For simplicity, we assume that each user $u$ in the network has an accepting rate $a_u$ after receiving a referral, where $a_u$ is i.i.d. for all the users. The rest of the paper will focus on how referral strategies and network structures change the structural bias against minority nodes.

When designing a referral strategy, referrers may consider elements including the personal relationship with the recipient, the likelihood that for the recipient to accept the referral, and the capabilities for the recipient to promote the referral in the next level. We start the analysis by focusing on the three common motivations in the following simplified \emph{constrained referral strategies}. In each of the referral strategy, we restrict the number of referrals a person can send to a fixed number (for simplicity, we set it to be one) to enforce the impact of referral strategies, and we compare the bias found in the first hop with that in the second hop.
\begin{itemize}
	\item \emph{Random referral strategy} ($\mathcal{S}_R$): each referrer $u$ selects one of the neighbors to share the referral uniformly at random, and the selected referee $v$ accepts the referral with probability $a_v$. Once accepted, $v$ repeats the same selection progress to promote the referral to the second hop.	
 \item \emph{Popularity-driven referral strategy} ($\mathcal{S}_P$): each referral sender $u$ selects the neighbor with the maximum degree among all of $u$'s neighbors to share the referral, and the selected referee $v$ accepts the referral with probability $a_v$. Once accepted, $v$ repeats the same selection progress to promote the referral to the second hop.
	\item \emph{Acceptance-driven referral strategy} ($\mathcal{S}_A$): each referral sender $u$ selects the neighbor with the maximum acceptance rate among all of $u$'s neighbors to share the referral, and the selected referee $v$ accepts the referral with probability $a_v$; if accepted, the referral stops at the first hop; otherwise, the selected referee $v$ passes the referral a neighbor $w$ selected uniformly at random, and the newly selected referee $w$ accepts with probability $a_w$.
\end{itemize}

Finally, to determine how structural bias against minority nodes differ for different values of $k$, we maximize the role a social network can play, and therefore do not limit the number of coupons a user can share. Specifically, we define the following unconstrained \emph{linear referral strategy}:
\begin{itemize}
	\item \emph{Linear referral strategy} ($\mathcal{S}_L$): for each user $u$, we define a random threshold $t_u$, where $t_u$ are i.i.d. for all nodes in $\mathcal{G}$. A referral sender $u$ shares the referral to a neighbor $v$, if and only if $t_u > a_v$. After receiving an referral, the new referrers can select their own referees following the same pattern.
\end{itemize}
Note that, under this referral strategy, different nodes at the same distance $k$ from a node $u$ have the same probability to reward a successful referral to that node $u$, and that this probability remains the same for all $u$. Therefore, to estimate how the expected $k$-hop gains (both active and passive) differ between node $u$ and node $v$, it is sufficient to compare the size of their $k$-hop neighborhood, respectively $deg^{(k)}(u)$ and $deg^{(k)}(v)$.

\subsection{How does bias change from 1-hop to 2-hop?}
Among four networks with binary labelled nodes, we first examine the level of homophily in the networks. Here we define \emph{cross edges} to be the edges with endpoints belonging to two different groups. We compare the fraction of cross edges in the network with the expected fraction of cross edges for the same network with no homophily. Specifically, let $r$ denote the \emph{population ratio} of minority in the network. When there is no homophily in the network, nodes has the equal chance of connecting to nodes of the same group, compared with connecting to nodes of the other group, and the expected fraction of cross-group edges in the network is $2r(1-r)$. We refer to the ratio of the empirical cross edges fraction over the expected fraction as the \emph{homophily rarefaction index}.

We simulate each of the referral strategies. Specifically, we assume $a_u$ and $t_u$ are independent \texttt{Uniform(0,1)} random variables for all $u$ in the network. We then perform each referral strategies 100 times on each of the four networks. In each iteration, we record the ratio of the total number of successful referrals made by red nodes in the 1st hop and in the 2nd hop as active gains and the ratio of the total number of red recipients in the 1st hop and in the 2nd hop as passive gains. The average ratios for active and passive gains are reported in Table \ref{active table} and Table \ref{passive table}, respectively.
\begin{table*}[ht]
\centering
\begin{tabular}{| *{5}{c|} }
	\hline
	 & {Deezer} &{DBLP}  & {Instagram}  & {Twitch} \\
	 \hline
	minority population ratio (\%) & {44.33}& {19.3}   & {54.4} & {7.05}\\
	cross edges (\%)&{47.49}&  {27.5}  & {41.7} &{6.0}\\
	homophily rarefication index & {0.96}&{0.88} & {0.84}  & {0.45}\\
\hline
\hline
	acceptance-driven  -- 1st hop (\%)& {42.20}& {19.09}  & {54.0}&{7.00}  \\
	acceptance-driven -- 2nd hop (\%) &{\color{ao}42.35}& {\color{ao}19.75}  &{\color{ao}55.0} &{\color{ao}7.80}\\
	\hline
	linear -- 1st hop (\%)& 42.85& {18.55}  &{51.64} &{3.81} \\
	linear -- 2nd hop (\%) &{\color{ao}42.91}&{\color{ao}19.81}  &{\color{orange}49.17} &{\color{orange}3.37}\\
	\hline
\end{tabular}
\caption{Active gain: we observe that under the \emph{acceptance-driven referral strategy}, 2-hop active gain always alleviate (colored in green) the bias found in 1-hop; under the \emph{linear referral strategy}, the bias is amplified (colored in orange) in Deezer and DBLP, but reduced in Instagram and Twitch.}
\label{active table}
\end{table*}
\subsubsection{Active gain}
By the construction of \emph{random referral strategy} and \emph{popularity-driven referral strategy}, the active gains for both the one-hop and the two-hop ratios are the network population ratio. Therefore, we may focus solely on the behavior of the \emph{acceptance-driven referral strategy} and the \emph{linear referral strategy}. We observe that, under the \emph{acceptance-driven referral strategy}, the active gain in 2-hop consistently alleviates the bias found in 1-hop in all four datasets, even though the four networks differ in their minority ratio and homophily rarefaction index. 

However, such a consistency is no longer seen in the \emph{linear referral strategy}. We find that networks with different minority population ratio and homophily rarefaction index, behaves differently in 2-hop gains, in terms of reproducing or reducing the bias found in 1-hop degrees. Specifically, for Deezer, in which population differs a little and there is almost no homophily, the bias found in 2-hop degree alleviates the bias in 1-hop active gains; for DBLP, in which the populations of the majority and minority differ a lot, and there is a high homophily rarefication, the bias found in 2-hop active gain alleviates the bias found in 1-hop; for Instagram which has small population difference and relatively high homophily rarefication index, the bias found in 2-hop reinforces the bias found in 1-hop degree; finally, for Twitch, in which there is a large population difference and very low homophily rarefication index, the bias found in 2-hop also reinforces the bias found in 1-hop. As the ratios of the \emph{linear referral strategy} completely replies on the network itself, the above observations indicate that the direction change of structural bias in the network is not consistent. 
\subsubsection{Passive gain}
\begin{table*}[ht]
\centering
\begin{tabular}{| *{5}{c|} }
	\hline
	 & {Deezer} &{DBLP}  & {Instagram}  & {Twitch} \\
	 \hline
	minority population ratio (\%) & {44.33}& {19.28}   & {54.44} & {7.05}\\
	cross edges (\%)&{47.49}&  {27.46}  & {41.67} &{5.93}\\
	homophily rarefication index & {0.96}&{0.88} & {0.84}  & {0.45}\\
\hline
\hline

	random -- 1st hop (\%)& {42.04} & {17.80}  &{51.05}&{5.61} \\
	random -- 2nd hop (\%) &{\color{ao}42.77} & {\color{ao}18.45}  &{\color{ao}52.95} &{\color{ao}5.89}\\
	\hline
	popularity-driven  -- 1st hop (\%) &{38.74} & {16.43}  &{49.80} &{3.00} \\
	popularity-driven -- 2nd hop (\%) &{\color{orange}33.64} & {\color{ao}17.02}  &{\color{orange}40.83} &{\color{orange}0.22}\\
		\hline
	acceptance-driven  -- 1st hop (\%)&{44.30}& {18.10}  &{50.79}&{5.64} \\
	acceptance-driven -- 2nd hop (\%) &{\color{ao}44.52} & {\color{ao}18.74}  &{\color{ao}54.23}&{\color{ao}6.40}\\
	\hline
	linear -- 1st hop (\%)& 42.85& {18.55}  &{51.64} &{3.81} \\
	linear -- 2nd hop (\%) &{\color{ao}42.91}&{\color{ao}19.81}  &{\color{orange}49.17} &{\color{orange}3.37}\\
	\hline
\end{tabular}
\caption{Passive gain: we observe that under the \emph{random referral strategy} and the \emph{acceptance-driven referral strategy}, 2-hop passive gain always alleviate (colored in green) the bias found in 1-hop; under the \emph{popularity-driven referral strategy}, the bias is amplified (colored in orange) for Deezer, DBLP, Instagra, and Twitch, and reduced in DBLP; under the \emph{linear referral strategy}, the ratios for passive gains are the same as the ones for active gains.}
\label{passive table}
\end{table*}
We first note that, under the \emph{linear referral strategy}, the ratios for passive gains are the same as for the ratios for active gains as expected.

For the constrained referral strategies, we observe that the bias found in 1-hop is alleviated in 2-hop, under both the \emph{random referral strategy} and the \emph{acceptance-driven referral strategy}, for all four datasets. On the other hand, under the \emph{popularity-driven referral strategy}, the second hop referral amplifies the bias found in the first hop for Deezer, Instagram, and Twitch, while alleviates it for DBLP. The agreement in the direction of the bias change across different datasets shows that the fairness of referral programs can be systematically affected by the referral strategies employed.

Overall, our observations for the active gain and the passive gain show that the change of bias in multi-hop referral program is complicated, and cannot be simply inferred from the change of bias in the corresponding one-hop referral program. More broadly, these observations alert that for any fairness metric that depends on node degrees, it may not be sufficient to conclude that a network algorithm always amplifies or alleviates the network structural bias by showing only that 1-hop degree behaviors. 

\section{Theoretical analysis for active gains}
We now examine the conditions that lead to an amplification (or alleviation) of the bias found in 1-hop referral programs for active gains. Particularly, we perform mathematical analysis on a homophilous model \cite{avin2015homophily}, show that under the constrained strategies, networks with varying degrees of inherent biases and homophily agree in the direction of bias change from 1-hop to 2-hop, and provide a precise threshold under which the bias in 1-hop gain is amplified in 2-hop gain in the unconstrained linear referral program.

\subsection{Model and definitions}
We first assume two well-observed network growth mechanisms:
\begin{enumerate}
	\item \emph{rich-get-richer} \cite{adamic2000power}\cite{barabasi2002evolution}: existing members with higher degree are likely to attract new connections than the ones with lower degree.
	\item \emph{homophily} \cite{mcpherson2001birds}: members tend to connect with members from the same group.
\end{enumerate}

Formally, for each node in the network, we label it red if belongs to the minority group, and blue otherwise. We use $\mathcal{G}(N_t, t, r, \rho)$ to denote the network generated though the a biased preferrential attachment (BPA) model \cite{avin2015homophily}, where $r$ is the ratio of minority in the population and is less than 0.5, and $\rho$ describes the tendency of a node accepting connections with nodes from the different group. At time $t=1$, we initialize the network with one red node connecting to a blue node. At time $t$, the network grows as follows:
\begin{itemize}
	\item \textbf{Node Growth}: a new node $u$ joins the network, and is colored red with probability $r$ and blue with probability $1-r$.
	\item \textbf{Edge Growth}: 
	\begin{itemize}
	\item the new node $u$ randomly pick an existing node $w$ with a probability proportional to the degree of $w$ in $\mathcal{G}(N_{t-1}, t-1, r, \rho)$ (\emph{rich-get-richer}). 
	\item If the color of $w$ is the same as the color of $u$, then $u$ connects with $w$ directly; otherwise, $u$ accepts the connection with probability $\rho$. If the connection is rejected, then $u$ repicks an existing node until a connection is built successfully (\emph{homophily}).
	\end{itemize}
\end{itemize}

\subsection{The 1-hop ratio is no greater than the minority ratio}
Next, we show that under the four referral programs, the ratios of expected active gain in the  first hop for the red nodes are no greater than the red population ratio.
\begin{lemma}\label{lem_1_hop}
	For a sequence of networks $\{\mathcal{G}(N_t, t, r, \rho)\}$ generated by the BPA model with $r<0.5$, as $t \rightarrow \infty$, under the \emph{random referral strategy} and the \emph{popularity-driven referral strategy}, the ratios of expected 1-hop active gain obtained by the red nodes are the red node population ratio; under the \emph{acceptance-driven referral strategy} and the \emph{linear referral program}, the ratios of expected 1-hop active gain obtained by red nodes are less than the red node population ratio.
\end{lemma}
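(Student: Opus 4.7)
The plan is to split into the four cases. Observe first that for each strategy $\mathcal{S}$, the quantity of interest
\begin{equation*}
    \frac{E[AG_{\mathcal{S}}^{(1)}(R)]}{E[AG_{\mathcal{S}}^{(1)}(R)]+E[AG_{\mathcal{S}}^{(1)}(B)]}
\end{equation*}
rewrites as a function of (i) the limiting fraction of red nodes, which converges to $r$ by a direct law of large numbers on the node-growth step of the BPA model, and (ii) the per-node expected gain conditional on the node's color. Because the acceptance rates $a_u$ (and the thresholds $t_u$ for the linear strategy) are i.i.d.\ and independent of the network structure, this per-node expected gain factors cleanly through the node's degree $d(u)$. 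The ratio then collapses to $\tfrac{r\mu_R}{r\mu_R + (1-r)\mu_B}$, which equals $r$ iff $\mu_R=\mu_B$ and is strictly smaller than $r$ iff $\mu_R<\mu_B$.

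For the \emph{random referral strategy}, the expected gain of any user $u$ is $E[a_v]$ for $v$ a uniformly chosen neighbor; since $a_v$ is independent of the network, this equals the constant $E[a]$, so $\mu_R=\mu_B$ and the ratio converges to $r$. For the \emph{popularity-driven strategy}, $u$ picks its highest-degree neighbor $v^{\star}$; because $a_{v^{\star}}$ is still independent of the identity of $v^{\star}$ (acceptance rates are drawn independently of the graph), the per-node expected gain is again the constant $E[a]$, and the ratio again converges to $r$.

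For the \emph{acceptance-driven strategy}, the expected 1-hop gain of $u$ is $\phi(d(u))$ with $\phi(d)=E[\max\{a_1,\ldots,a_d\}]$ for i.i.d.\ copies of $a$, and $\phi$ is strictly increasing in $d$. For the \emph{linear strategy}, the expected 1-hop gain of $u$ equals $d(u)\cdot P(t>a)$, linear in $d(u)$. The linear case reduces immediately to the bias-amplification property of the BPA model from \cite{avin2015homophily}: for $r<0.5$ and $\rho<1$, the minority share of total degree converges to some $\tilde r<r$, so $\mu_R<\mu_B$ in the linear case and the ratio is strictly below $r$. For the acceptance-driven case, I will additionally argue that the limiting degree distribution of blue stochastically dominates that of red---a strengthening extractable from the Polya-urn-style recursions underlying the proof in \cite{avin2015homophily}---so that monotonicity of $\phi$ gives the analogous strict inequality $\mu_R<\mu_B$.

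The principal obstacle is the acceptance-driven case, since merely knowing $E[d(u)\mid u\text{ red}]<E[d(u)\mid u\text{ blue}]$ does not suffice: $\phi$ is concave, so a naive Jensen bound could point the wrong way. I expect to address this either by establishing full stochastic dominance between the two limiting degree distributions in the BPA model, or, failing that, by decomposing expectations level by level and using the explicit power-law tail exponents derived in the BPA analysis to show that red nodes are over-represented at the small degrees where $\phi$ takes its smallest values. The other three cases are essentially bookkeeping once the identification $\text{ratio}\to\tfrac{r\mu_R}{r\mu_R+(1-r)\mu_B}$ and the Avin et al.\ edge-share inequality are in hand.
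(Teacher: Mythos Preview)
Your proposal is correct and follows essentially the same route as the paper: both arguments reduce to comparing per-node expected gains $\mu_R$ versus $\mu_B$, observe that the gain is a (possibly constant) nondecreasing function of degree, and then appeal to an ordering of the red and blue degree distributions in the BPA model. The only point where you hedge---how to get stochastic dominance for the acceptance-driven case---is exactly where the paper is most concrete: it invokes the power-law result of Avin et al.\ that $p^r_t(k)\propto k^{-\beta(R)}$ and $p^b_t(k)\propto k^{-\beta(B)}$ with $\beta(R)>\beta(B)$, which gives \emph{likelihood-ratio} ordering (and hence first-order stochastic dominance) immediately, so your ``failing that'' alternative is in fact the direct path.
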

\begin{proof}
	Denote the expected referral gain obtained by a red (blue) member of degree $k$ as $h(k, r)$ ($h(k, b)$), and the probability of a randomly selected red (blue) node having degree $k$ on $\mathcal{G}(N_t, t, r, \rho)$ as $p^r_t(k)$ ($p^b_t(k)$). Note that all the four referral programs satisfy the following three conditions: (1). $\lim_{t\rightarrow \infty} h(k, i)p_t^i(k) <\infty$, (2). $h(k, b) \geq h(k, r)$, and (3). $h(k,i)$ being non-decreasing for $i\in \{r,b\}$. Furthermore,
\begin{align}
		&\lim_{t\rightarrow \infty}\frac{\sum_{k\geq1}h(k,r)p^r_t(k)}{\sum_{k\geq1}h(k,b)p^b_t(k)} - \lim_{t\rightarrow \infty}\frac{\sum_{k\geq1}p^r_t(k)}{\sum_{k\geq1} p^b_t(k)}\\
		= &\lim_{t\rightarrow \infty} \frac{\sum_{i>j}\left(h(i,b) - h(j,r)\right) (p^r_t(i)p_t^b(j) -  p^r_t(j)p^b_t(i))}{\sum_{k\geq1}h(k,b)p^b_t(k)\sum_{k\geq1}p^b_t(k)}.
		\label{eq-monotone_ineq}
\end{align}
By \cite{avin2015homophily}[Theorem 4.12], we know for all $i>j$, $\lim_{t\rightarrow}p_t^r(k) \propto k^{-\beta (R)}$ and $\lim_{t\rightarrow}p_t^b(k) \propto k^{-\beta (B)}$ with $\beta(R) > \beta(B)$. Therefore,
\begin{equation}
	\lim_{t\rightarrow \infty} \frac{p^r_t(i)}{p^r_t(j)} \leq \lim_{t\rightarrow \infty} \frac{p^b_t(i)}{p^b_t(j)},
\end{equation}
which implies that the limit of (\ref{eq-monotone_ineq}) $\leq 0$. As $\lim_{t\rightarrow \infty}\frac{\sum_{k\geq1}p^r_t(k)}{\sum_{k\geq1} p^b_t(k)} = 1$, we have
\begin{equation}
	\lim_{t\rightarrow \infty}\frac{\sum_{k\geq1}h(k, r)p^r_t(k) }{\sum_{k\geq1}h(k, b)p^b_t(k)} \leq 1.
\end{equation}
The equality holds if and only if $h(k,b) = h(k,r)$ and $h(k,i)$ is constant in $k$, the conditions the \emph{random referral strategy} and the \emph{popularity-driven referral strategy} satisfy.
\end{proof}

\subsection{Constrained referral strategies}
Now we started the analysis of the change of bias from 1-hop to 2-hop for the constrained referral programs.

Because the chance that a red node succeeds in a 2-hop referral is the same as a blue node succeeds under both the \emph{random referral strategy} and the \emph{popularity-driven referral strategy}, the ratios of their expected 2-hop active gains are the population ratios; that is,
	\begin{equation}
		\lim_{t\rightarrow \infty}\frac{\mathbb{E}[AG_{\mathcal{S}_R(t)}^{(2)}(R)]}{\mathbb{E}[AG_{\mathcal{S}_R(t)}^{(2)}(B)]} = \lim_{t\rightarrow \infty}\frac{\mathbb{E}[AG_{\mathcal{S}_P(t)}^{(2)}(R)]}{\mathbb{E}[AG_{\mathcal{S}_P(t)}^{(2)}(B)]}= \frac{r}{1-r}.
	\end{equation}
The following theorem shows that under the \emph{acceptance-driven referral strategy}, the bias found in the first hop cannot be amplified in the second hop.
\begin{corollary}
	Let $\{\mathcal{G}(N_{t-1}, t-1, r, \rho)\}$ be a sequence of graphs generated through the BPA model with $r<1/2$. Under the \emph{acceptance-driven referral strategy} $\mathcal{S}_a$, the ratio of expected active gain for red nodes in the second hop is no greater than that in the first hop. That is,
	\begin{equation}
		\lim_{t\rightarrow \infty}\frac{\mathbb{E}[AG_{\mathcal{S}_A(t)}^{(2)}(R)]}{\mathbb{E}[AG_{\mathcal{S}_A(t)}^{(2)}(B)]} \geq \lim_{t\rightarrow \infty}\frac{\mathbb{E}[AG_{\mathcal{S}_A(t)}^{(1)}(R)]}{\mathbb{E}[AG_{\mathcal{S}_A(t)}^{(1)}(B)]}.
	\end{equation}
\end{corollary}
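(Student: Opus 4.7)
The plan is to mirror the framework of the proof of Lemma~\ref{lem_1_hop}, but with a 2-hop gain function that is \emph{decreasing} rather than increasing in degree, which flips the sign in the resulting Chebyshev-style inequality. Because the acceptance rates $\{a_v\}$ are i.i.d.\ across the network and, in particular, independent of color, the expected $k$-hop active gain at a node $u$ depends only on $\deg(u)$. Writing $M_k$ for the maximum of $k$ i.i.d.\ copies of $a$ and $\mu_a=\mathbb{E}[a]$, the 1-hop gain of a degree-$k$ node is $h^{(1)}(k)=\mathbb{E}[M_k]$, since $u$ selects the neighbor $v^\star$ with the largest acceptance rate and earns with probability $a_{v^\star}$. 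For the 2-hop gain, $u$ is rewarded precisely when $v^\star$ rejects and the neighbor $w$ to whom $v^\star$ forwards (uniformly at random) accepts. Because $a_w$ is independent of $a_{v^\star}$, this yields $h^{(2)}(k)=(1-\mathbb{E}[M_k])\mu_a$. Crucially, $h^{(1)}$ is strictly increasing in $k$, $h^{(2)}$ is strictly decreasing, and the ratio $h^{(2)}(k)/h^{(1)}(k)=\mu_a(1-\mathbb{E}[M_k])/\mathbb{E}[M_k]$ is strictly decreasing.

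Setting $H^{(\ell)}_c=\sum_{k\ge 1}h^{(\ell)}(k)p_t^c(k)$ and noting the common prefactor $|R_t|/|B_t|\to r/(1-r)$ cancels from both sides of the inequality, the corollary reduces to $H^{(2)}_r H^{(1)}_b\ge H^{(2)}_b H^{(1)}_r$ in the limit. Expanding the products and symmetrizing the pairs $(i,j)$ and $(j,i)$ (the diagonal vanishes) gives
\[
H^{(2)}_r H^{(1)}_b - H^{(2)}_b H^{(1)}_r \;=\; \sum_{i<j}\bigl[h^{(2)}(i)h^{(1)}(j)-h^{(2)}(j)h^{(1)}(i)\bigr]\bigl[p_t^r(i)p_t^b(j)-p_t^r(j)p_t^b(i)\bigr].
\]
Both bracketed factors are non-negative on $\{i<j\}$: the first because $h^{(2)}/h^{(1)}$ is decreasing, and the second by exactly the tail argument used in Lemma~\ref{lem_1_hop}, namely Theorem 4.12 of \cite{avin2015homophily} yields $p_t^r(k)\propto k^{-\beta(R)}$ and $p_t^b(k)\propto k^{-\beta(B)}$ with $\beta(R)>\beta(B)$, so $p_t^r(i)/p_t^r(j)=(j/i)^{\beta(R)}>(j/i)^{\beta(B)}=p_t^b(i)/p_t^b(j)$ whenever $i<j$. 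Sending $t\to\infty$ delivers the inequality in the corollary.

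The main obstacle is justifying the clean factorization $h^{(2)}(k)=(1-\mathbb{E}[M_k])\mu_a$, which relies on $a_w$ being independent of $a_{v^\star}$. The uniform-at-random forwarding rule used by $v^\star$ does not depend on any acceptance rate consulted at the argmax step performed by $u$, so the i.i.d.\ hypothesis on $\{a_v\}$ supplies the needed independence directly. The boundary case $w=u$ is harmless because $a_u$ shares the common marginal $\mu_a$. Once the factorization is in place, the rest of the argument is a textbook Chebyshev-style rearrangement on the double sum, with the monotonicity direction reversed relative to Lemma~\ref{lem_1_hop} to produce $\ge$ in place of $\le$.
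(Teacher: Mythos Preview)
Your argument is correct and rests on the same key observation as the paper: the per-node expected 2-hop gain satisfies $h^{(2)}(k)=\mu_a\bigl(1-h^{(1)}(k)\bigr)$, with $h^{(1)}(k)=\mathbb{E}[M_k]$. The paper exploits this affine relation more directly than you do. Averaging over the degree distribution $p_t^c$ immediately gives $H_c^{(2)}=\mu_a(1-H_c^{(1)})$; writing $P_c^1:=H_c^{(1)}$, the 2-hop ratio is $\frac{r(1-P_r^1)}{(1-r)(1-P_b^1)}$ and the 1-hop ratio is $\frac{rP_r^1}{(1-r)P_b^1}$, so the desired inequality is equivalent to $P_r^1\le P_b^1$, which is exactly what Lemma~\ref{lem_1_hop} already established. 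Your Chebyshev rearrangement on the double sum is valid but re-derives this conclusion; once the affine identity is in hand, the detour through $\sum_{i<j}[\cdot][\cdot]$ is unnecessary. Your route would genuinely buy something only if $h^{(2)}/h^{(1)}$ were merely monotone rather than coming from an affine relation, but here both proofs use the same affine structure as input.
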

\begin{proof}
We denote the probability that a referral sent from a red (blue) node being accept at the first hop as $P_r^1$ ($P_b^1$) as ${t\rightarrow \infty}$. As proved in Lemma \ref{lem_1_hop}, they satisfy $P_r^1 \leq P_b^1$. We have the following expression 
\begin{equation}
	\label{eq_ag_1}
	\lim_{t\rightarrow \infty}\frac{\mathbb{E}[AG_{\mathcal{S}_A(t)}^{(1)}(R)]}{\mathbb{E}[AG_{\mathcal{S}_A(t)}^{(1)}(B)]} = \frac{r\cdot P_r^1 }{(1-r)\cdot P_b^1}.
\end{equation}
 For the 2-hop, notice that if the 1-hop friend who receives the referral denies referral and randomly gives the referral to one of its friends, the probability that the referral gets accepted at the 2-hop is actually a constant. We thus have that 
\begin{equation}
	\label{eq_ag_2}
	\lim_{t\rightarrow \infty}\frac{\mathbb{E}[AG_{\mathcal{S}_A(t)}^{(2)}(R)]}{\mathbb{E}[AG_{\mathcal{S}_A(t)}^{(2)}(B)]} = \frac{r\cdot(1-P_r^1) 
	}{(1-r)\cdot (1-P_b^1) 
	},
\end{equation}
Since $P_r^1 \leq P_b^1$, it is easy to check that (\ref{eq_ag_2}) is greater than or equal to (\ref{eq_ag_1}), and the proof is finished.
\end{proof}

\subsection{Linear referral strategy: a precise threshold}
We have seen, in each of the constrained referral strategy, the directions of bias change agree over BPA networks with varying minority ratio and homophily. However, when it comes to the unconstrained linear strategy, the picture is more complex. In this sub-section, we prove mathematically sufficient and necessary conditions determining when 2-hop can effectively mitigate unfairness found in the 1-hop\footnote{All the theorems stated in this subsection on 2-hop active gains stay true when $k$-hop is considered; we present the generalized theorems with proofs for $k$ hop in the appendix. The intuition for the ratio of the expected sum for 2-hop active gains among red nodes being the same as the $k$-hop ratio is that both red nodes and blue nodes extend their $k$-hop neighbors through their direct blue neighbors. Hence, the $k$-hop ratio is essentially the ratio of the number of blue neighbors for red nodes over the number of blue neighbors for blue nodes, for any $k>2$.}. 

We first provide an explicit expression for the ratio of the expected sum for 2-hop active gains among red nodes over that among blue nodes.
\begin{theorem}\label{kth thm} (proof in appendix \ref{ap B})
	For a sequence of networks $\mathcal{G}(N_t, t, r, \rho)$ generated by the BPA model with $r<0.5$, 	\begin{align}
	\lim_{t\rightarrow \infty}\frac{\mathbb{E}[AG_{\mathcal{S}_L(t)}^{(2)}(R)]}{\mathbb{E}[AG_{\mathcal{S}_L(t)}^{(2)}(B)]} =\frac{\beta_2}{ \beta_3},
	\end{align}
where
	\begin{align}
	\beta_2 = \frac{r \rho}{2(1-(1-\alpha^*)(1-\rho))}, \\
	\beta_3 = \frac{1-r}{2(1-\alpha^*(1-\rho))},
\end{align}
and $\alpha^*$ is the unique solution in $[0,1]$ of the equation 
\begin{align}\label{alpha_eqq}
	2 \alpha^{*}=1-(1-r) \frac{\left(1-\alpha^{*}\right)}{1-\alpha^{*}(1-\rho)}+r \frac{\alpha^{*}}{1-\left(1-\alpha^{*}\right)(1-\rho)}.
\end{align}
\end{theorem}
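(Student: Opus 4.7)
Plan. First I would reduce the expected 2-hop active gain to a deterministic graph statistic. Because $t_u$ and $a_v$ are iid Uniform$(0,1)$, a single referral attempt along an oriented edge $u\to v$ succeeds with probability $\tfrac12$, and conditional on $\mathcal{G}$ the indicator coin flips on disjoint oriented edges are independent. Consequently $E[AG^{(2)}_{\mathcal{S}_L}(u)\mid \mathcal{G}] = \tfrac14 \sum_{v\sim u}(\deg(v)-1)$, where the $-1$ excludes the back-edge to $u$. Aggregating over $u$ in each color class gives
\[
E[AG^{(2)}_{\mathcal{S}_L}(R)\mid \mathcal{G}] = \tfrac14\sum_v \deg_R(v)(\deg(v)-1), \qquad E[AG^{(2)}_{\mathcal{S}_L}(B)\mid \mathcal{G}] = \tfrac14\sum_v \deg_B(v)(\deg(v)-1),
\]
so the ratio in the theorem is exactly the limiting ratio of these two ``cherry-like'' sums.

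Next I would identify $\alpha^*$ as the asymptotic fraction of total degree carried by red, $\alpha^* = \lim_t M_R(t)/(M_R(t)+M_B(t))$. Unrolling the homophilic pick-then-reject loop of BPA gives the effective attachment probabilities $p_{R\to R} = \alpha^*/(1-(1-\alpha^*)(1-\rho))$, $p_{R\to B}=1-p_{R\to R}$, and $p_{B\to R}, p_{B\to B}$ symmetrically; the martingale balance $E[\Delta M_R]=2\alpha^*$ per time step becomes $2\alpha^* = r(1+p_{R\to R}) + (1-r)p_{B\to R}$, which after substitution and using $p_{B\to R}=1-p_{B\to B}$ is precisely (\ref{alpha_eqq}). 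Uniqueness of $\alpha^*$ in $[0,1]$ follows from the monotonicity of the right-hand side in $\alpha^*$ and a boundary check at the endpoints.

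The technical heart of the proof is the asymptotic evaluation of the cherry sums $\Phi_R(t):=\sum_v \deg_R(v)\deg(v)$ and $\Phi_B(t):=\sum_v \deg_B(v)\deg(v)$. I would split each by the color of the ``center'' node, introducing $\Phi_R^R, \Phi_R^B, \Phi_B^R, \Phi_B^B$, and set up a stochastic recursion for this 4-tuple from the BPA update rule. When a new node $x$ of color $c_x$ attaches to an existing $y$ of degree $d$ with $d_R$ red neighbors, one computes $\Delta\Phi_R = d_R + d\,\mathbf{1}\{c_x=R\} + \mathbf{1}\{c_x=R\} + \mathbf{1}\{c_y=R\}$, and averages against the biased attachment law $P(y=v\mid c_x)\propto \deg(v)\cdot w(c_x,c_y)$ with weight $w=1$ for same color and $w=\rho$ otherwise; together with the analogous $\Delta\Phi_B$ recursion this yields a closed linear system for the four $\Phi$-statistics in terms of $(M_R,M_B,E_{RR},E_{RB},E_{BB})$. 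Solving that system and taking ratios, both $\Phi_R$ and $\Phi_B$ grow at the same order in $t$, and the quotient collapses, after algebraic manipulation exploiting (\ref{alpha_eqq}), to $(r\,p_{R\to B})/((1-r)\,p_{B\to B})$, which is exactly $\beta_2/\beta_3$.

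The main obstacle is the fluid-limit analysis of the quadratic statistics $\Phi_R^c,\Phi_B^c$: unlike the marginal per-color degree distributions handled by \cite{avin2015homophily}[Theorem 4.12], these are second-moment quantities that may grow super-linearly in $t$ whenever either power-law exponent $\beta(R),\beta(B)$ is at most $3$. One therefore has to justify that their normalized ratio still concentrates on a deterministic limit (via a Robbins-Monro / stochastic-approximation argument on the normalized statistics), and then push through the algebraic reduction that telescopes the solution of the coupled linear system into the compact form stated. I expect that reduction to hinge crucially on $\alpha^*$ satisfying (\ref{alpha_eqq}), so that many intermediate terms cancel and only the clean closed-form expression $\beta_2/\beta_3$ survives.
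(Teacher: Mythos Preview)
Your setup is essentially the paper's. The cherry sums $\Phi_C$ and their color-of-center refinements $\Phi_C^{C'}$ are exactly the paper's statistics $M^{(1)}_{C',*C}=\sum_{i\in V_{C'}}d_i\,d_{i,C}$, and your one-step increment $\Delta\Phi_R=d_R+(d+1)\mathbf 1\{c_x=R\}+\mathbf 1\{c_y=R\}$ is correct. One small slip: the coin flips on oriented edges out of a common $u$ are \emph{not} independent, since they all share $t_u$; this is harmless because you only use linearity of expectation, and your formula $\mathbb E[AG^{(2)}(u)\mid\mathcal{G}]=\tfrac14\sum_{v\sim u}(\deg v-1)$ stands. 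Your target $(r\,p_{R\to B})/((1-r)\,p_{B\to B})$ is also the right closed form for $\beta_2/\beta_3$.

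Where your plan diverges from the paper is the mechanism for the collapse. You expect the quotient to reduce to $\beta_2/\beta_3$ ``after algebraic manipulation exploiting (\ref{alpha_eqq})''. That is not how it happens. The paper does not solve the full coupled system; instead it shows that the blue-centered second moment $M^{(1)}_{B,**}=\sum_{v\in B}\deg(v)^2$ grows like $t^{2\beta_2+2\beta_3}$ while the red-centered analogue $M^{(1)}_{R,**}$ grows only like $t^{2\beta_1+2\beta_4}$, and proves the strict exponent inequality $\beta_1+\beta_4<\beta_2+\beta_3$ from $\alpha^*<1/2$ (this is the only place the fixed-point equation enters). With the red-centered pieces discarded as lower order, a separate scalar recursion gives $M^{(1)}_{B,*B}/M^{(1)}_{B,**}\to\beta_3/(\beta_2+\beta_3)$, hence $\Phi_R\sim\tfrac{\beta_2}{\beta_2+\beta_3}M^{(1)}_{B,**}$ and $\Phi_B\sim\tfrac{\beta_3}{\beta_2+\beta_3}M^{(1)}_{B,**}$, and the ratio is immediate. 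In words: both colors' 2-hop neighborhoods are asymptotically dominated by paths through blue intermediaries, so the ratio is governed by red-to-blue versus blue-to-blue edge counts, i.e.\ $r\,p_{R\to B}:(1-r)\,p_{B\to B}$. If you try to carry out your plan literally you will be diagonalizing a coupled Cauchy--Euler system whose leading eigenvalue is $2\beta_2+2\beta_3$ and whose eigenvector encodes the $\beta_2:\beta_3$ split, so you would rediscover the same dominance argument in disguise. Finally, the concentration step you flag is handled in the paper not via stochastic approximation on normalized statistics but by a direct high-probability bound $|\alpha(t)-\alpha^*|=O(t^{-1/3}\vee t^{-1/2}\log t)$, which lets one replace $\beta_i(t)$ by $\beta_i$ inside expectations with a multiplicative error $1+o(t^{-1/4})$.
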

As proven in \cite{avin2015homophily}, $\alpha^*$ in (\ref{alpha_eqq}) is the fraction of the sum of 1-hop active gains among red nodes over that among all nodes, as $t$ goes to infinity, and is always less or equal to the minority population ratio $r$. We show that the proportion of the total 2-hop active gains among red nodes is also no greater than $r$ in the following corollary.
\begin{corollary}\label{2th less than r}(proof in appendix \ref{ap C})
	For a sequence of networks $\mathcal{G}(N_t, t, r, \rho)$ generated by the BPA model with $r<0.5$, under the \emph{linear referral strategy}, as $t\rightarrow \infty$, the ratio of the expected sum for 2-hop active gains among red nodes over that among all nodes is no greater than the minority population ratio $r$. That is,
	\begin{align}\label{thres}
	\frac{\beta_2}{\beta_2 + \beta_3}\leq r,
	\end{align}
where $\beta_2, \beta_3$ are defined as in Theorem \ref{kth thm}.
\end{corollary}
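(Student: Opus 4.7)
The plan is to reduce the inequality in (\ref{thres}) to a purely algebraic statement in $\alpha^*$ and $\rho$ that does not require invoking the fixed-point identity (\ref{alpha_eqq}) at all, only the range $\alpha^*\in[0,1]$.

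First I would factor out the population weights by writing $\beta_2 = \tfrac{r}{2}A$ and $\beta_3 = \tfrac{1-r}{2}B$ with
\[
A \;=\; \frac{\rho}{1-(1-\alpha^*)(1-\rho)}, \qquad B \;=\; \frac{1}{1-\alpha^*(1-\rho)}.
\]
Since $r\in(0,1/2)$, the claim $\beta_2/(\beta_2+\beta_3)\le r$ is equivalent to $(1-r)\beta_2 \le r\beta_3$, and after cancelling the common factor $r(1-r)/2>0$ this collapses to the two-variable inequality $A\le B$.

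Next I would check that both denominators are strictly positive for every $\alpha^*\in[0,1]$ and $\rho\in(0,1]$: each is of the form $1-x(1-\rho)$ with $x\in[0,1]$, so it lies in $[\rho,1]\subseteq(0,1]$. The degenerate case $\rho=0$ is immediate from (\ref{alpha_eqq}), which forces $\alpha^*=r>0$ and hence $A=0\le B$. Under the positivity of the denominators, cross-multiplication turns $A\le B$ into
\[
\rho\bigl(1-\alpha^*(1-\rho)\bigr) \;\le\; 1-(1-\alpha^*)(1-\rho).
\]
Substituting $q:=1-\rho$ and expanding both sides, the difference (right minus left) simplifies to $\alpha^* q(2-q)$, which is manifestly non-negative since $\alpha^*\ge 0$, $q\in[0,1]$, and $2-q\ge 1$. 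This closes the argument.

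I do not expect a genuine obstacle here: the inequality in fact holds for every $\alpha^*\in[0,1]$, independently of the particular fixed-point value picked out by (\ref{alpha_eqq}). The only steps requiring care are bookkeeping during the cancellation of $r(1-r)/2$ (to make sure the direction of the inequality is preserved) and the sign check on the two denominators before cross-multiplying; the final polynomial identity $\alpha^* q(2-q)\ge 0$ is the conceptual punchline that makes the bound tight exactly when $\alpha^*=0$ or $\rho=1$.
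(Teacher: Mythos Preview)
Your argument is correct and essentially identical to the paper's own proof: both reduce $\beta_2/(\beta_2+\beta_3)\le r$ to $A\le B$, cross-multiply, and arrive at the same nonnegative polynomial---your $\alpha^* q(2-q)$ with $q=1-\rho$ is exactly the paper's $\alpha^*(1+\rho)(1-\rho)$. The only difference is that you are more careful about justifying the sign of the denominators before cross-multiplying, which the paper leaves implicit.
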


Furthermore, we provide a precise threshold for the amplification (vs. reduction) of bias in 2-hop referral programs.
\begin{corollary}
For a sequence of networks $\mathcal{G}(N_t, t, r, \rho)$ generated by the BPA model, as $t\rightarrow \infty$, the ratio of the expected sum for 2-hop gains among red nodes over that among all nodes is greater than the ratio of the expected sum for 1-hop active gains among red nodes over that among blue nodes if and only if:
	\begin{align}\label{thres}
	\frac{\beta_2}{\beta_2 + \beta_3}	> \alpha^*,
	\end{align}
and $\beta_2, \beta_3$ are defined as in Theorem \ref{kth thm}.
\end{corollary}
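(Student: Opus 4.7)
The plan is to combine Theorem \ref{kth thm} with the characterization of $\alpha^*$ recalled from \cite{avin2015homophily}, and then observe that the claimed iff is a direct algebraic restatement of what both quantities already represent. First I would apply Theorem \ref{kth thm} directly: it asserts
$$\lim_{t\to\infty}\frac{\mathbb{E}[AG^{(2)}_{\mathcal{S}_L(t)}(R)]}{\mathbb{E}[AG^{(2)}_{\mathcal{S}_L(t)}(B)]} \;=\; \frac{\beta_2}{\beta_3}.$$
Dividing numerator and denominator by $\beta_2+\beta_3$ (equivalently, normalizing the red-to-blue ratio into a red-to-total share), one sees that the limiting proportion of 2-hop active gains going to red nodes is exactly $\beta_2/(\beta_2+\beta_3)$.

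Next I would invoke the fact already recorded in the text following Theorem \ref{kth thm}: the unique solution $\alpha^*\in[0,1]$ of (\ref{alpha_eqq}) is, by \cite{avin2015homophily}, precisely the limiting proportion of 1-hop active gains going to red nodes under $\mathcal{S}_L$. Once both limiting red-shares are identified in closed form, the corollary's statement "the 2-hop red share strictly exceeds the 1-hop red share" is literally the inequality $\beta_2/(\beta_2+\beta_3)>\alpha^*$, which is the claimed threshold, and both directions of the iff follow at once.

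A small bookkeeping point: the English phrasing of the corollary writes the 2-hop quantity as a red-over-all share while the 1-hop quantity is written as a red-over-blue ratio. Because the map $x\mapsto x/(1-x)$ is a strictly increasing bijection on $(0,1)$, passing between red-over-all and red-over-blue preserves strict inequalities, so the comparison in the corollary is equivalent to the red-over-all form $\beta_2/(\beta_2+\beta_3)>\alpha^*$ stated as the threshold.

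The main obstacle is not this corollary itself but the preceding Theorem \ref{kth thm}, which supplies the closed forms for $\beta_2$ and $\beta_3$ through the fixed-point equation (\ref{alpha_eqq}); once those are in hand, together with the identification of $\alpha^*$ as the 1-hop red share, the iff is a one-line algebraic consequence and no new limit or convergence argument is needed.
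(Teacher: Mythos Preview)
Your proposal is correct and matches the paper's treatment: the paper does not give a standalone proof of this corollary, treating it as an immediate consequence of Theorem~\ref{kth thm} (which gives the 2-hop red-to-blue ratio $\beta_2/\beta_3$) together with the identification of $\alpha^*$ as the limiting 1-hop red share from \cite{avin2015homophily}. Your extra bookkeeping remark about the monotone map $x\mapsto x/(1-x)$ handling the red-over-all versus red-over-blue phrasing is a nice touch that the paper leaves implicit.
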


Note that we cannot simplify the threshold (\ref{thres}),  into a direct expression on $r$ and $\rho$, as it involves solving $\alpha^*$ which is the root of a cubic equation (so no simple expression for $\alpha^*$). However, with any pairs of fixed $r$ and $\rho$, we can solve $\alpha^*$, and get numerical values for $\beta_2$ and $\beta_3$. In Figure \ref{simulation}, we plot $\beta_2/(\beta_2 + \beta_3)$ and $\alpha^*$ as a function of $\rho$ under four different population ratios.

\begin{figure*}[t]
	\includegraphics[scale =0.35]{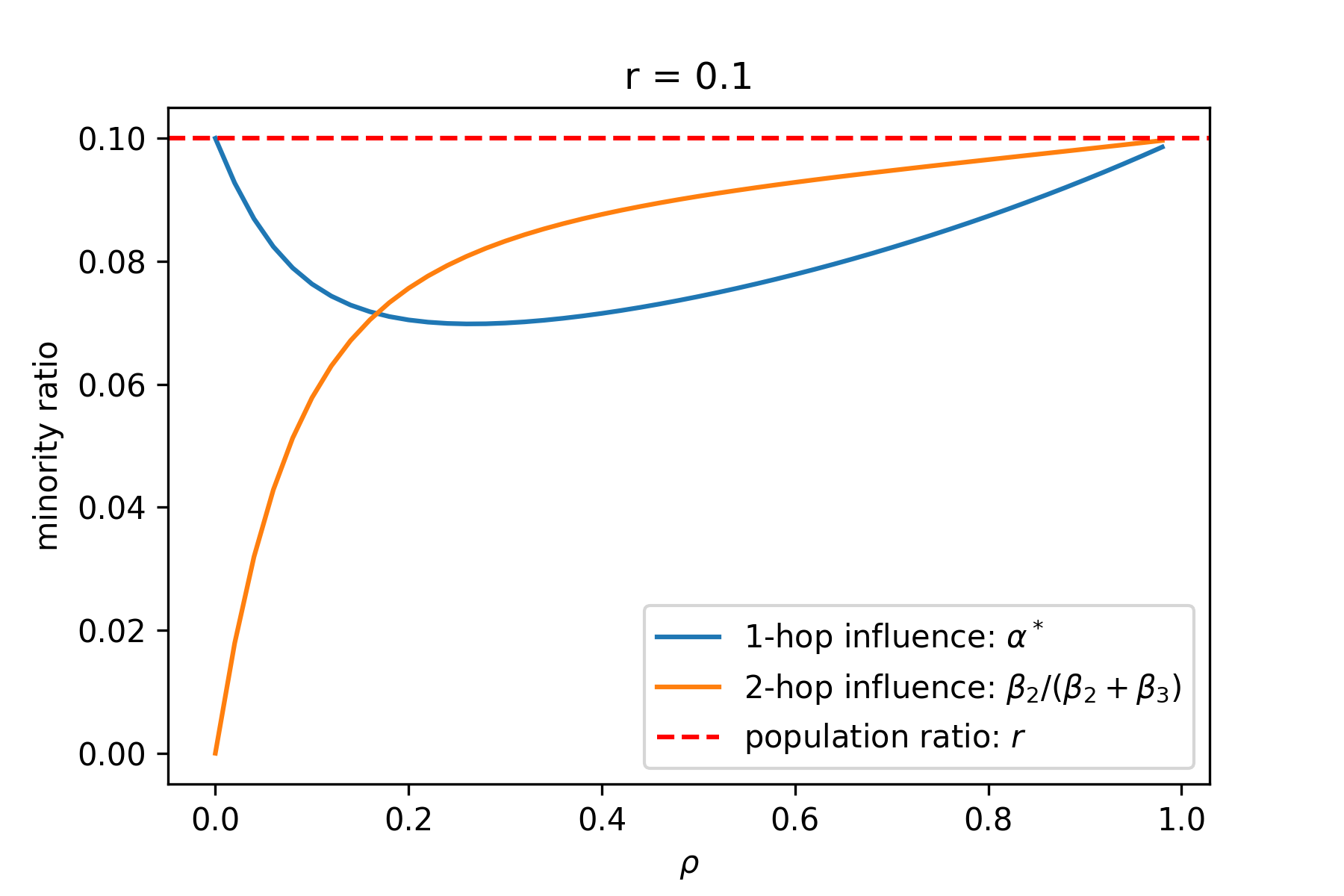}
	\includegraphics[scale =0.35]{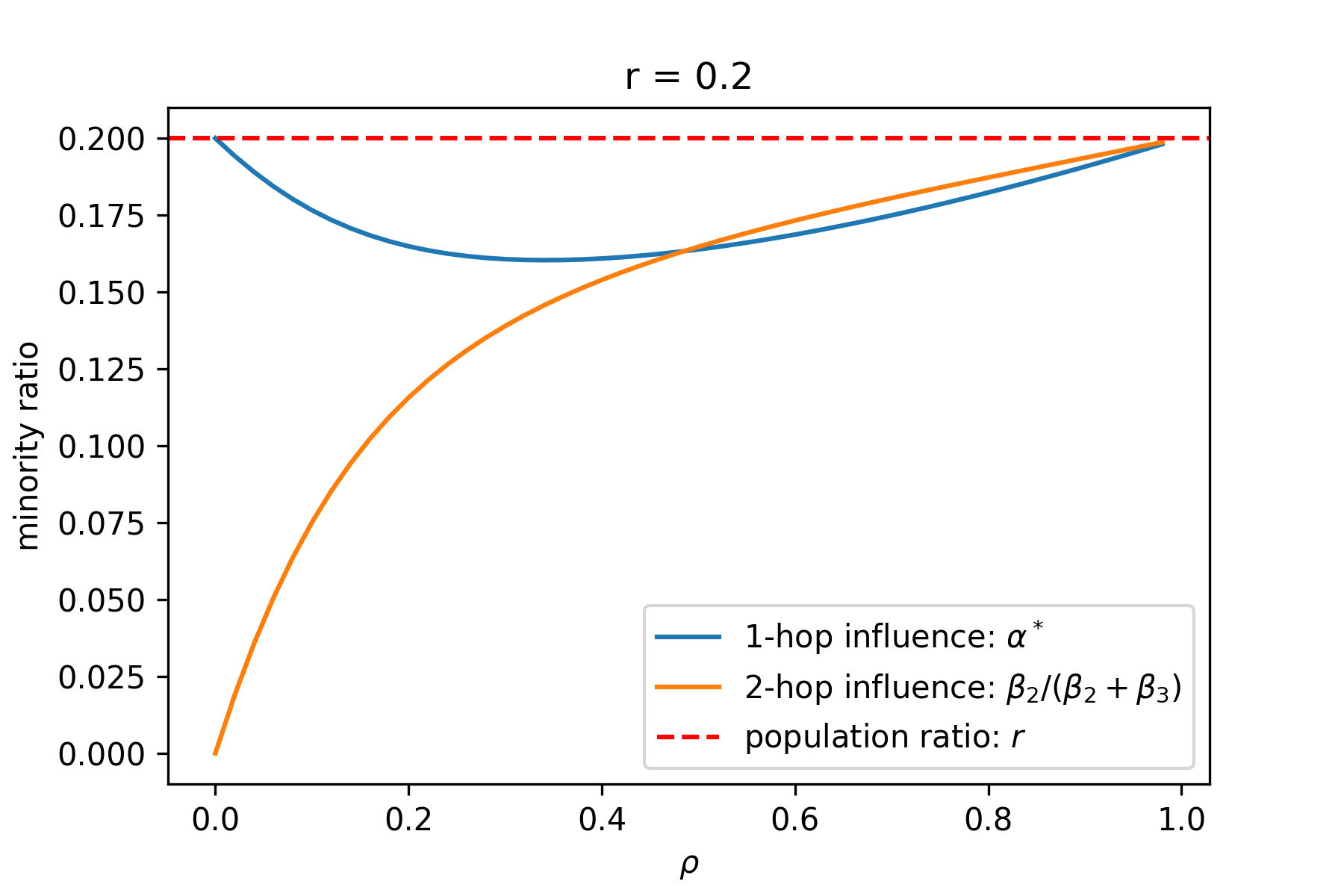}
	\includegraphics[scale =0.35]{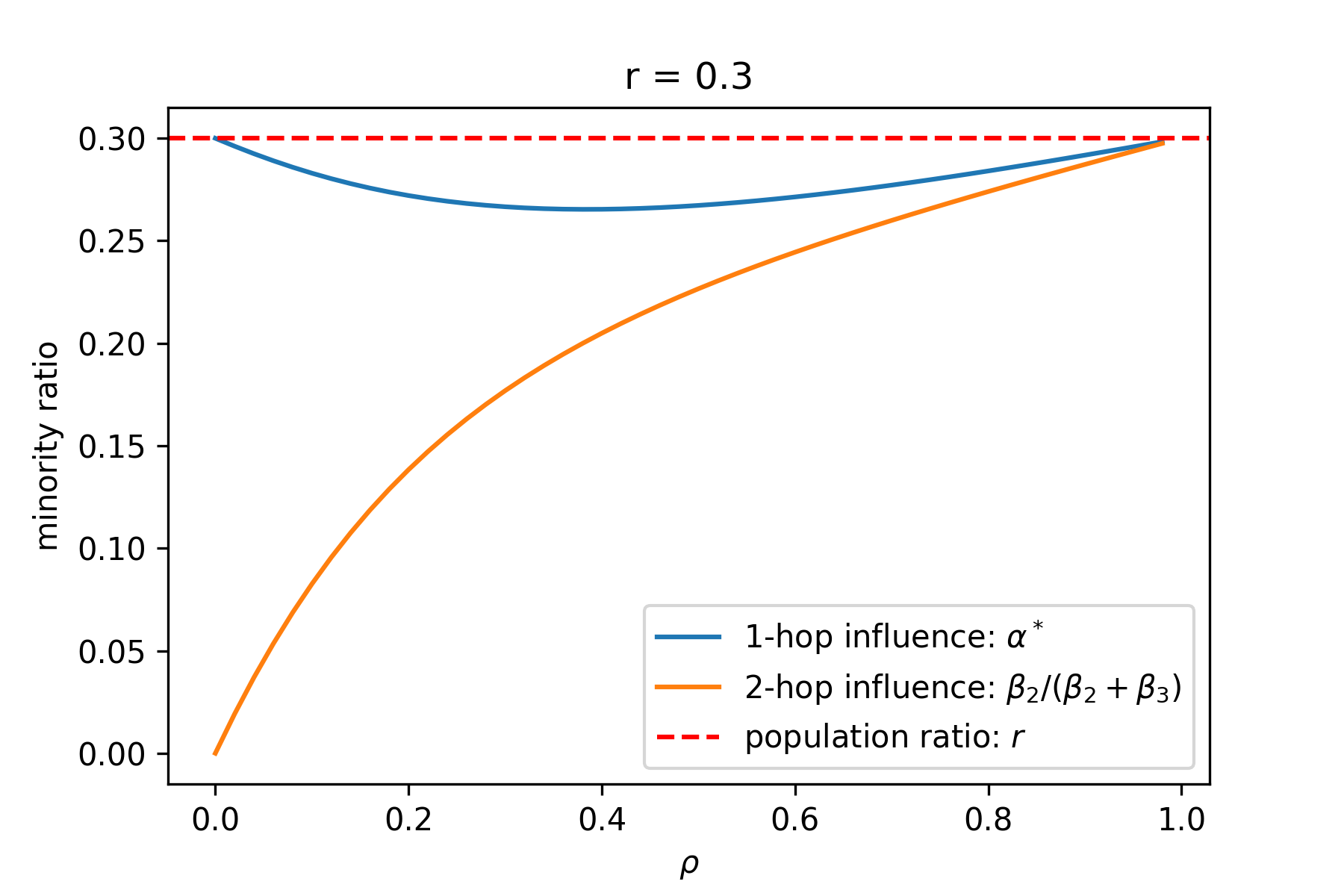}
	\caption{The 2-hop active gains is likely to reduce the 1-hop active gains when (1) the minority ratio in the network is small, and (2) the tendency of connecting with nodes from the other group is large.}
	\label{simulation}
\end{figure*}

We observe that the 2-hop gains reduce the bias found in the 1-hop gains when the population ratio of minorities $r$ is small, and the tendency of accepting a connection from the other group $\rho$ is large. These theoretical results align with our empirical observations in Table \ref{active table} and Table \ref{passive table}. The intuition behind is that red nodes connects to their 2-hop neighbors through their 1-hop neighbors. Among their 1-hop neighbors, the blue neighbors have much higher degrees than the red neighbors. When there is very limited proportion of red nodes in the network and the tendency of red nodes accepting blue connections is large, the chance of a red node connecting to a red node becomes small, which in turn increases the chance of expanding their 2-hop neighbors though blue 1-hop neighbors.

We also notice that when the network is very homophilous, the 1-hop ratios and 2-hop ratios have different limits; that is, as $\rho$ goes to 0, the 1-hop ratio goes to the minority population ratio $r$ as expected in \cite{avin2015homophily}, while the 2-hop ratio goes to $0$. The intuition for the 2-hop ratio is that red nodes connect to their 2-hop neighbors though their direct neighbors. Among the direct neighbors, the blue neighbors have much higher degrees that indeed dominate the count of 2-hop neighbors for red nodes. When $\rho$ is close to zero, red nodes have very limited blue friends, which restrict the red nodes from expanding 2-hop neighbors. We prove this intuition formally in the following corollary.

\begin{corollary}\label{limit} (proof in appendix \ref{ap D})
For a sequence of networks $\mathcal{G}(N_t, t, r, \rho)$ generated by the BPA model with $r<0.5$, as $\rho \rightarrow 0$,
\begin{align}
	\lim_{t\rightarrow \infty}\frac{\mathbb{E}[AG_{\mathcal{S}_L(t)}^{(1)}(R)]}{\mathbb{E}[AG_{\mathcal{S}_L(t)}^{(1)}(B)]}
	 & = \frac{r}{1-r} \text{ and }
	 \lim_{t\rightarrow \infty}\frac{\mathbb{E}[AG_{\mathcal{S}_L(t)}^{(2)}(R)]}{\mathbb{E}[AG_{\mathcal{S}_L(t)}^{(2)}(B)]} = 0.
\end{align}
\end{corollary}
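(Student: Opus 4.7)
The plan is to pass to the limit $\rho \to 0$ in the closed-form expressions for the 1-hop and 2-hop ratios already at our disposal. The only non-algebraic step is to verify that $\alpha^*$ depends continuously on $\rho$ near $0$ and identify its limit; once that is in hand, both claims reduce to substitution.

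First I would establish $\alpha^*(\rho) \to r$ as $\rho \to 0$. Clearing denominators in equation (\ref{alpha_eqq}) turns it into a polynomial equation in $\alpha^*$ whose coefficients are continuous (indeed polynomial) in $\rho$. Since Theorem \ref{kth thm} guarantees a unique root in $[0,1]$, continuous dependence of roots on coefficients gives that $\rho \mapsto \alpha^*(\rho)$ is continuous. Formally substituting $\rho = 0$ into (\ref{alpha_eqq}) yields $2\alpha^* = 1 - (1-r) + r = 2r$, so $\alpha^*(0) = r$. Note that $r \in (0, 1/2)$ keeps the limiting value away from the singularities of the right-hand side, so the substitution is legitimate.

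Next, for the 1-hop ratio, recall (from the remark following Theorem \ref{kth thm}, via \cite{avin2015homophily}) that $\alpha^*$ equals the fraction of 1-hop active gains collected by red nodes, so
\begin{equation*}
\lim_{t\rightarrow \infty}\frac{\mathbb{E}[AG_{\mathcal{S}_L(t)}^{(1)}(R)]}{\mathbb{E}[AG_{\mathcal{S}_L(t)}^{(1)}(B)]} = \frac{\alpha^*}{1-\alpha^*} \;\xrightarrow[\rho\to 0]{}\; \frac{r}{1-r}.
\end{equation*}
For the 2-hop ratio, Theorem \ref{kth thm} gives $\mathbb{E}[AG_{\mathcal{S}_L(t)}^{(2)}(R)]/\mathbb{E}[AG_{\mathcal{S}_L(t)}^{(2)}(B)] \to \beta_2/\beta_3$. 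Rearranging the definitions of $\beta_2$ and $\beta_3$,
\begin{equation*}
\frac{\beta_2}{\beta_3} = \frac{r\rho}{1-r}\cdot \frac{1-\alpha^*(1-\rho)}{1-(1-\alpha^*)(1-\rho)}.
\end{equation*}
Using $\alpha^* \to r$, the second factor tends to $(1-r)/r$, which is a finite, nonzero constant; the prefactor $r\rho/(1-r)$ tends to $0$. Hence the product tends to $0$, proving the second limit.

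The main (and essentially only) subtlety is Step 1: justifying continuity of $\alpha^*$ at $\rho = 0$ so that the implicit equation can be evaluated in the limit. This is routine given uniqueness of the root in $[0,1]$, but is worth stating explicitly because at $\rho = 0$ the right-hand side of (\ref{alpha_eqq}) contains $\alpha^*/\alpha^*$ and $(1-\alpha^*)/(1-\alpha^*)$ expressions whose well-definedness relies on the limiting value $r$ lying strictly inside $(0,1)$. Once this is checked, the rest is direct algebraic substitution into the formulas from Theorem \ref{kth thm}.
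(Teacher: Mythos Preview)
Your proposal is correct and follows essentially the same approach as the paper's proof: both identify $\alpha^* \to r$ as $\rho \to 0$, use this to deduce the 1-hop ratio $\alpha^*/(1-\alpha^*) \to r/(1-r)$, and then substitute into $\beta_2/\beta_3$ from Theorem~\ref{kth thm} to get the 2-hop limit of $0$. If anything, you are more careful than the paper in discussing the continuity of $\alpha^*(\rho)$ at $\rho=0$; the paper simply asserts $\alpha^* \to r$ without comment.
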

\section{Insights on passive gains}
While active gains represent the direct benefits for nodes sending referrals, people also implicitly benefit from getting access to referrals like commercial coupons and recruitment opportunities. In this section, we extend our analysis on the active gains to passive gains. 

First, note that under the \emph{linear referral strategy}, the ratios in the expected sum of $k$-hop passive gains again are equivalent to the ratios in the expected sum of $k$-hop degrees. Therefore, the conditions under which the bias in 1-hop is alleviated in the second hop for active gains hold for passive gains. 
\begin{corollary}
For a sequence of networks $\mathcal{G}(N_t, t, r, \rho)$ generated by the BPA model, under the \emph{linear referral strategy}, as $t\rightarrow \infty$, the ratio of the expected sum for 2-hop passive gains among red nodes over that among all nodes is greater than the ratio of the expected sum for 1-hop passive gains among red nodes over that among blue nodes if and only if:
	\begin{align}\label{thres}
	\frac{\beta_2}{\beta_2 + \beta_3}	> \alpha^*,
	\end{align}
and $\beta_2, \beta_3$ are defined as in Theorem \ref{kth thm}.
\end{corollary}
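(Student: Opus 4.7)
The plan is to reduce this corollary to the active-gain threshold corollary proved in the preceding subsection, via a direct symmetry argument on the linear strategy. The key observation is that under $\mathcal{S}_L$, a referral traverses the directed edge $u \to v$ precisely when $t_u > a_v$, and since $\{t_u\}$ and $\{a_v\}$ are families of i.i.d.\ Uniform$(0,1)$ random variables drawn independently of one another, the event $\{t_u > a_v\}$ has the same distribution as $\{t_v > a_u\}$. So each undirected edge is equally likely to "fire" in either direction.

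Formally, I would extend this to paths. Fix an ordered length-$k$ path $u_0, u_1, \dots, u_k$ in $\mathcal{G}$. The contribution of this path to $AG^{(k)}_{\mathcal{S}_L}(u_0)$ is the indicator of $\bigcap_{i=1}^{k}\{t_{u_{i-1}} > a_{u_i}\}$, while its contribution to $PG^{(k)}_{\mathcal{S}_L}(u_k)$ along the reversed path is the indicator of $\bigcap_{i=1}^{k}\{t_{u_i} > a_{u_{i-1}}\}$. Because the $t$'s and $a$'s are i.i.d.\ Uniform$(0,1)$ drawn independently for each node and are independent of the graph structure, these two intersections have the same probability (each factor has probability $1/2$, and the factors are mutually independent since they involve disjoint sets of indices). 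Summing over all length-$k$ paths and taking expectations, I get
\begin{equation}
\mathbb{E}\bigl[AG^{(k)}_{\mathcal{S}_L}(u)\bigr] = \mathbb{E}\bigl[PG^{(k)}_{\mathcal{S}_L}(u)\bigr]
\end{equation}
for every node $u$, and hence, summing over colour classes,
\begin{equation}
\frac{\mathbb{E}[PG^{(k)}_{\mathcal{S}_L(t)}(R)]}{\mathbb{E}[PG^{(k)}_{\mathcal{S}_L(t)}(B)]} = \frac{\mathbb{E}[AG^{(k)}_{\mathcal{S}_L(t)}(R)]}{\mathbb{E}[AG^{(k)}_{\mathcal{S}_L(t)}(B)]}.
\end{equation}
This matches the paper's earlier remark that both quantities are, under $\mathcal{S}_L$, proportional to the expected sum of $k$-hop degrees.

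With this equivalence in hand, the corollary follows immediately from the corresponding active-gain threshold corollary: the ratio for 2-hop passive gains exceeds $\alpha^*$ exactly when the ratio for 2-hop active gains does, and the latter is characterised precisely by $\beta_2/(\beta_2+\beta_3) > \alpha^*$ by Theorem \ref{kth thm} combined with the already-established fact that $\alpha^*$ is the limiting 1-hop active-gain share for red nodes. Since $PG^{(1)}$ likewise coincides with $AG^{(1)}$ under $\mathcal{S}_L$, the same $\alpha^*$ also describes the limiting 1-hop passive-gain share, so the threshold has the identical form claimed.

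There is no substantive obstacle here; the only care needed is to verify the symmetry at the path level and not only at the single-edge level, which is why I would emphasise the independence of the $t$'s and $a$'s across distinct nodes. Once that is spelled out, the statement is a direct corollary and requires no new computation with the BPA degree distribution.
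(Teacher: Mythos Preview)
Your approach is correct and matches the paper's: the paper simply notes (without a separate proof) that under $\mathcal{S}_L$ the passive-gain ratios coincide with the active-gain ratios because both reduce to $k$-hop degree ratios, and then invokes the active-gain threshold corollary directly. Your path-reversal symmetry argument is just a more explicit justification of that same observation; one tiny wording fix is that consecutive factors $\{t_{u_{i-1}}>a_{u_i}\}$ and $\{t_{u_i}>a_{u_{i+1}}\}$ do \emph{not} involve disjoint node indices, but they are still independent because $a_{u_i}$ and $t_{u_i}$ are distinct independent variables, so your conclusion is unaffected.
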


Now, to analyze the behavior for the constrained referral strategies, we simulate 100 BPA networks on 10,000 nodes for different sets of population ratios $r$ and homophilous parameters $\rho$. On each network, we perform the three constrained referral strategies. We then report the average red passive gain ratios for the first hop and for the second hop in Figure \ref{passive simulation}. 

\begin{figure*}[h]
\includegraphics[height = 3.5cm]{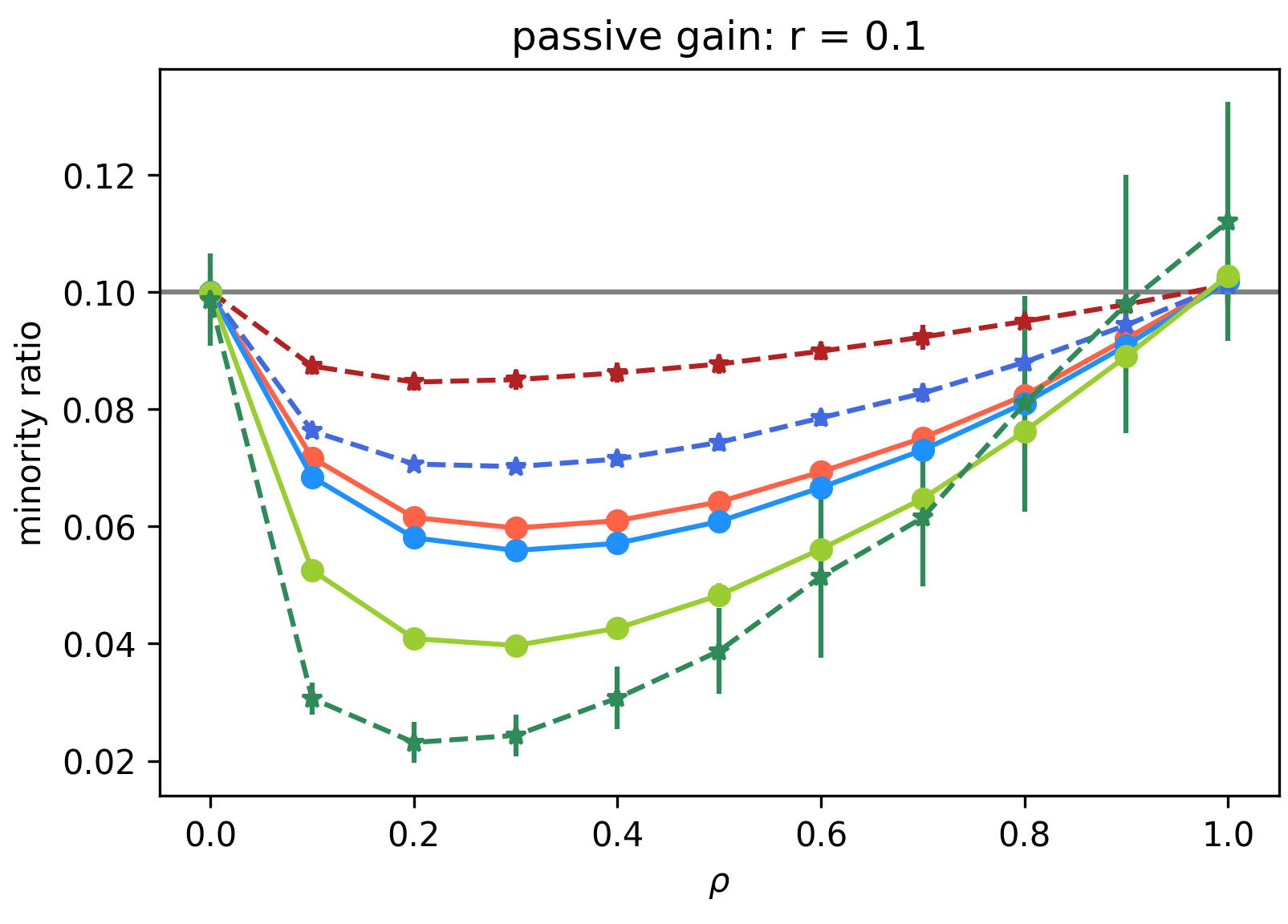}
\includegraphics[height = 3.5cm]{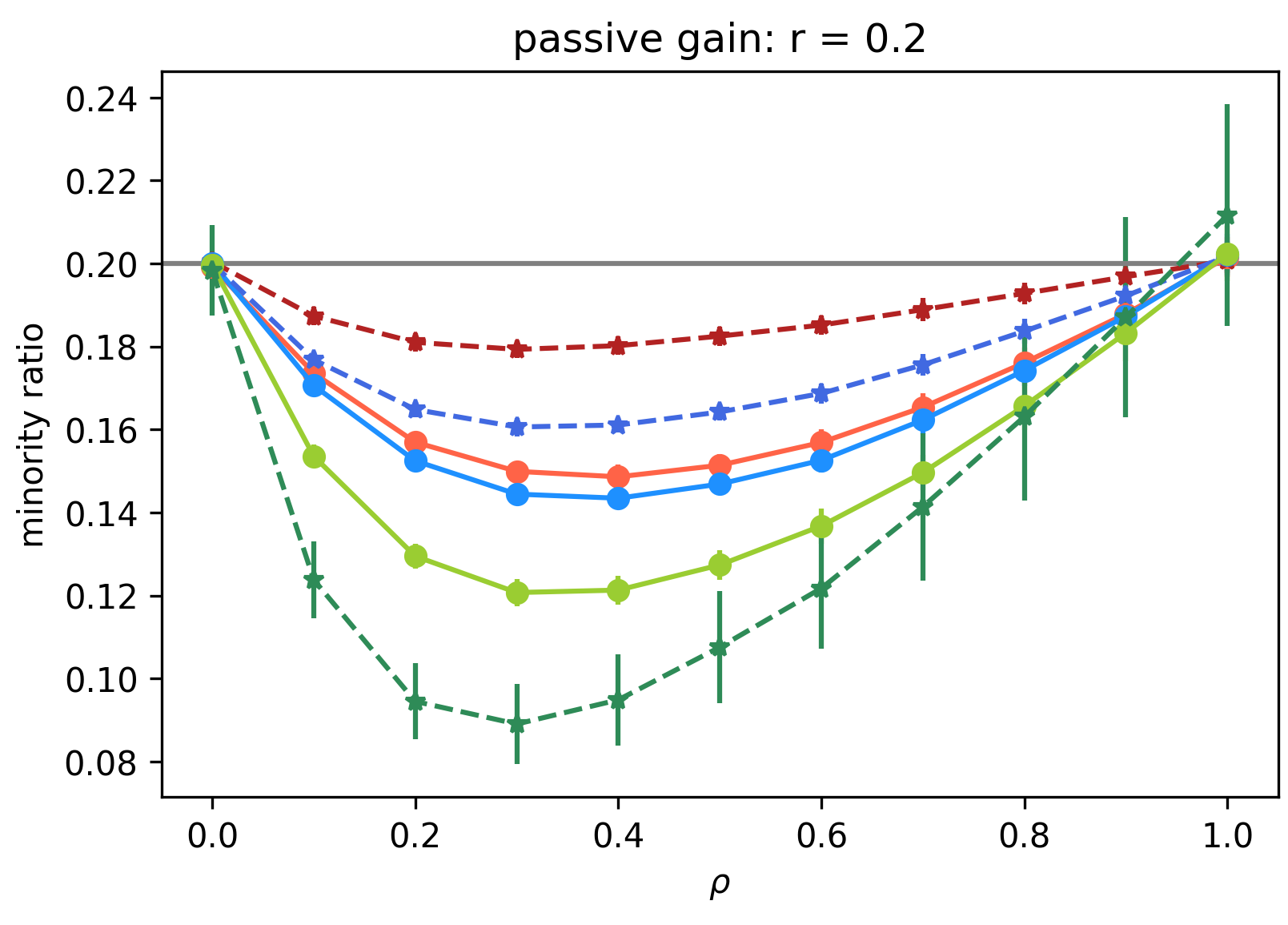}
\includegraphics[height = 3.5cm]{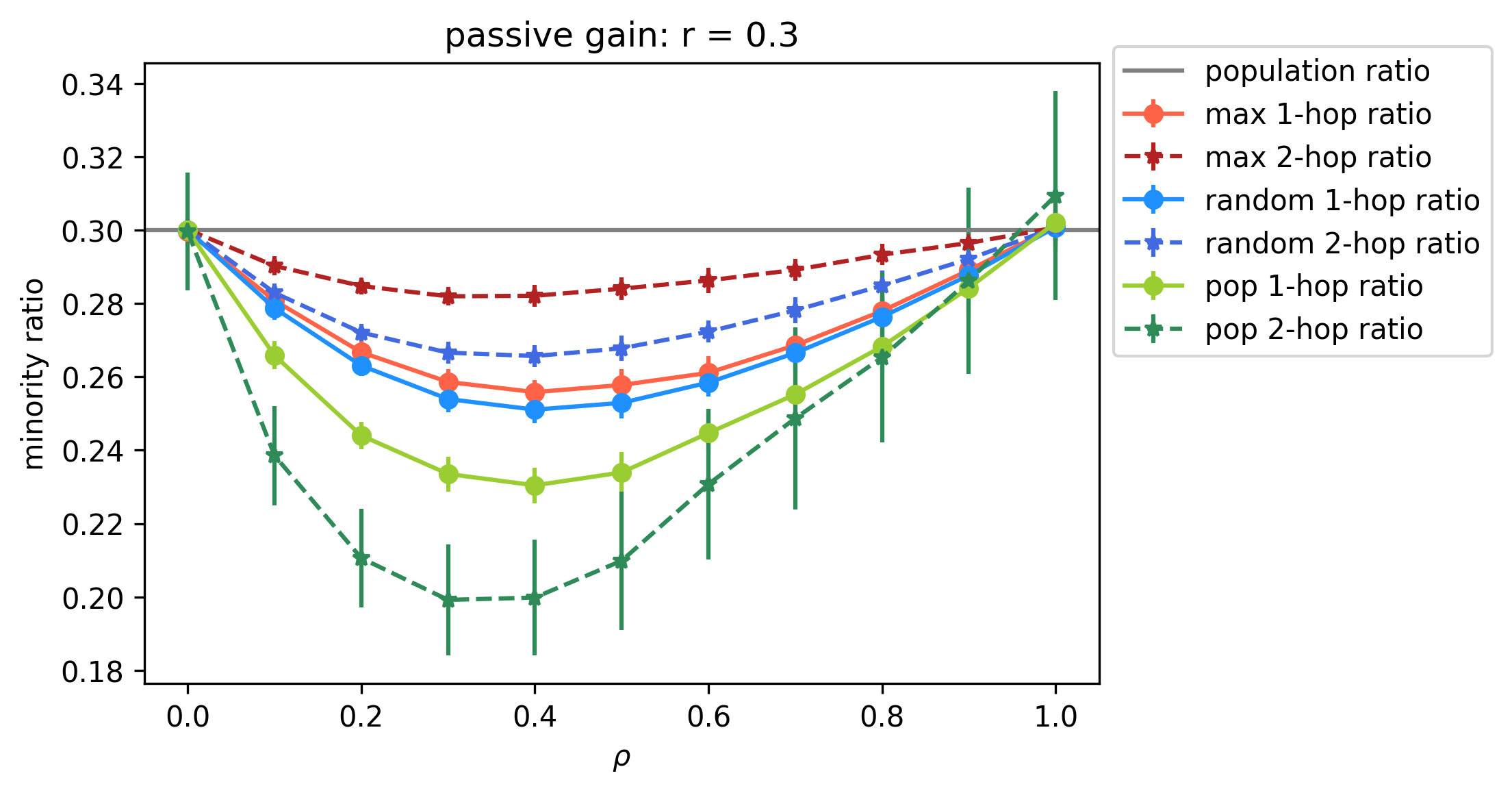}
	\caption{Under the \emph{random referral strategy}, the bias of 1-hop passive gain is alleviated in the second hop; under the \emph{popularity-driven referral strategy}, the bias of 1-hop passive gain is amplified in the second hop; under the \emph{acceptance-driven referral strategy}, the bias of 1-hop passive gain is amplified in the second hop.}
	\label{passive simulation}
\end{figure*}

We observe that unlike the linear referral program where the direction of bias change from 1-hop to 2-hop depends on the minority ratio and homophily, under each constrained referral strategy, the direction across different variables remain consistent. Indeed, as $\rho$ goes to $0$ or $1$, both the ratios for the expected sum of red passive gain in 1-hop and in 2-hop converge to the minority population ratio, which makes the two ratio trends over various $\rho$ less likely to cross. Formally,
\begin{lemma}
For a sequence of networks $\mathcal{G}(N_t, t, r, \rho)$ generated by the BPA model with $r<0.5$, under any of the three constrained referral strategies, as $t\rightarrow \infty$, as $\rho \rightarrow 0$ or $\rho \rightarrow 1$,
	\begin{align}
	\lim_{t\rightarrow \infty}\frac{\mathbb{E}[PG^{(2)}_{\mathcal{S}(t)}(R)]}{\mathbb{E}[PG^{(2)}_{\mathcal{S}(t)}(B)]} = \lim_{t\rightarrow \infty}\frac{\mathbb{E}[PG^{(1)}_{\mathcal{S}(t)}(R)]}{\mathbb{E}[PG^{(1)}_{\mathcal{S}(t)}(B)]}
	 =  \frac{r}{1-r},
\end{align}
\end{lemma}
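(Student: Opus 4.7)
The plan is to exploit the structural degeneration of the BPA model at the two endpoints $\rho\to 1$ and $\rho\to 0$, and to argue that in each limit the passive-gain ratio collapses to the population ratio $r/(1-r)$ irrespective of the strategy or the hop count. Throughout, I would first take $t\to\infty$ at fixed $\rho$ and then pass to the $\rho$ endpoint by continuity.

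First I would handle $\rho\to 1$. When $\rho=1$ the edge-growth rule accepts every proposed connection, so the BPA process collapses to classical preferential attachment with colors assigned as i.i.d.\ Bernoulli$(r)$ labels that are independent of the graph structure. Since each of $\mathcal{S}_R$, $\mathcal{S}_P$, $\mathcal{S}_A$ selects a neighbor using only the structural data (degrees) together with the i.i.d.\ acceptance rates $a_u$, the color of the chosen neighbor is itself Bernoulli$(r)$. Summing over the $\Theta(t)$ senders then yields $\mathbb{E}[PG^{(1)}_{\mathcal{S}(t)}(R)]/\mathbb{E}[PG^{(1)}_{\mathcal{S}(t)}(B)]\to r/(1-r)$. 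For the second hop, the recipient is obtained by applying the same color-blind selection rule at the 1-hop recipient (for $\mathcal{S}_A$ this is the uniform forwarding after rejection, which is still color-blind), so the identical argument carries over.

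Next I would handle $\rho\to 0$. When $\rho=0$ a new node rejects every cross-color candidate and redraws until a same-color partner is found, which exists almost surely because both groups are nonempty from $t=1$. The resulting graph is therefore the disjoint union of a red and a blue preferential-attachment subgraph, joined only by the single seed cross-edge. Except at the two endpoints of that seed edge, every neighborhood is monochromatic, so each strategy sends referrals strictly within the color class of the sender. This gives $\mathbb{E}[PG^{(1)}_{\mathcal{S}(t)}(R)]=r\,t\,(1+o(1))$ and $\mathbb{E}[PG^{(1)}_{\mathcal{S}(t)}(B)]=(1-r)\,t\,(1+o(1))$, and the same color-preservation argument applies at hop 2; for $\mathcal{S}_A$ the 2-hop step is taken only when the 1-hop is rejected, but this factor multiplies both the red and the blue counts and drops out of the ratio.

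The hard part will be justifying the interchange of limits (first $t\to\infty$, then $\rho$ to an endpoint). I would handle it by establishing continuity in $\rho$ of (i) the limiting joint degree-and-color distribution in a sender's neighborhood, which follows from the BPA transition kernel being polynomial in $\rho$ together with the continuity in $\rho$ of the fixed point $\alpha^*$ defined in Theorem \ref{kth thm}, and (ii) the strategy-specific probability that the selected neighbor is red, which is a continuous function of that joint distribution. A minor but distinct subtlety in the $\rho\to 0$ case is to bound the contribution of the seed cross-edge explicitly: only two vertices have any cross-color neighbor, so the aggregate deviation from the purely monochromatic situation is $O(1)$ against a $\Theta(t)$ leading term and therefore vanishes as $t\to\infty$.
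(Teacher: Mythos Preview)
The paper states this lemma without proof; it appears in Section~5 merely as a formal summary of the simulation observation in Figure~\ref{passive simulation}, with no argument given. So there is nothing in the paper to compare your proposal against, and I can only assess it on its own terms.

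Your overall plan is sound and captures the right structural degenerations: at $\rho=1$ the color labels decouple from the graph and from the (color-blind) selection rules, so every recipient at either hop is red with probability~$r$; at $\rho=0$ the BPA tree splits into two monochromatic preferential-attachment subtrees joined only by the seed edge, so referrals stay within their color class and the counts are proportional to the subpopulation sizes.

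There is one place where you elide a step that genuinely needs an argument. In the $\rho\to 0$ case you assert that the per-sender success factor ``multiplies both the red and the blue counts and drops out of the ratio.'' For $\mathcal{S}_R$ and $\mathcal{S}_P$ this is immediate, since the acceptance event depends only on the i.i.d.\ rate $a_v$. But for $\mathcal{S}_A$ the hop-1 success probability is $\mathbb{E}[\max_{v\in N(u)}a_v]$, which depends on $\deg(u)$, and the hop-2 event is conditioned on hop-1 rejection, which again depends on $\deg(u)$. For these factors to cancel in the ratio you need the limiting degree distribution of a uniformly chosen red sender (inside the red subtree) to coincide with that of a uniformly chosen blue sender (inside the blue subtree). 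That is true---each monochromatic piece is a standard preferential-attachment tree whose limiting degree law does not depend on the thinning rate $r$ or $1-r$---but it should be said explicitly. Relatedly, your expression $\mathbb{E}[PG^{(1)}_{\mathcal{S}(t)}(R)]=r\,t\,(1+o(1))$ is missing a strategy-dependent constant $c_{\mathcal{S}}\in(0,1)$; this does not affect the ratio, but the formula as written is not correct.

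Your sketch for the interchange of limits via continuity of $\alpha^*(\rho)$ and of the BPA transition probabilities is the right idea; making it fully rigorous near $\rho=0$ (where one must control how quickly the cross-edge contribution vanishes relative to how slowly $\rho$ approaches zero) will require more care than you indicate, but the outline is reasonable.
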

We now attempt to understand the direction of bias change for each of the three constrained referral strategies. To do that, we introduce two well-known network effects: \emph{friendship paradox} and \emph{glass-ceiling effect}.
\begin{lemma}\label{paradox}
	(friendship Paradox in BPA) Let $\{\mathcal{G}(N_{t-1}, t-1, r, \rho)\}$ be a sequence of graphs generated through the BPA model with $r<1/2$. For $c_1, c_2 \in \{R,B\}$, denote $F_{c_1c_2}(t)$ as the degree of the $c_2$ color node of a randomly chosen edge that connects two nodes with $c_1, c_2$ colors (if $c_1 = c_2$, randomly choose one end of the edge). We then have
	\begin{equation}
		\lim_{t \rightarrow \infty} \mathbb{E}[F_{c_1c_2}(t)] \geq 
		\lim_{t \rightarrow \infty} \mathbb{E}[d_{c_2}(t)],
	\end{equation}
	where $d_{c_2}(t)$ is the degree of a randomly chosen $c_2$ color node.
\end{lemma}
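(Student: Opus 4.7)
The plan is to rewrite $\mathbb{E}[F_{c_1c_2}(t)]$ as a size-biased expectation over $V_{c_2}$, then reduce the inequality to a non-negative covariance statement that becomes Cauchy--Schwarz in the BPA asymptotic. For a fixed realization of $\mathcal{G}$, let $N_{c_1}(v)$ denote the number of $c_1$-colored neighbors of $v$. A uniformly random $(c_1,c_2)$-edge (or a uniform endpoint of a random monochromatic $c_2$-$c_2$ edge when $c_1=c_2$) lands its $c_2$-endpoint at a specific $v\in V_{c_2}$ with probability proportional to $N_{c_1}(v)$. Hence
\begin{equation*}
\mathbb{E}\bigl[F_{c_1c_2}(t)\bigm|\mathcal{G}\bigr] \;=\; \frac{\sum_{v \in V_{c_2}} N_{c_1}(v)\,d(v)}{\sum_{v\in V_{c_2}} N_{c_1}(v)},\qquad \mathbb{E}\bigl[d_{c_2}(t)\bigm|\mathcal{G}\bigr] \;=\; \frac{1}{|V_{c_2}|}\sum_{v \in V_{c_2}} d(v),
\end{equation*}
so the desired inequality is equivalent to the empirical non-negative covariance $\mathrm{Cov}_{v\in V_{c_2}}\bigl(N_{c_1}(v),d(v)\bigr) \geq 0$ in expectation as $t\to\infty$.

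I would then establish this covariance using the asymptotic structure of the BPA model. The key observation is that each edge incident to a given $c_2$-colored node $v$---whether inherited when $v$ was born or acquired as a later node attached to $v$---connects $v$ to a $c_1$-colored vertex with an asymptotic probability $q_{c_1\mid c_2}$ that depends only on $r$ and $\rho$ (and not on $v$'s degree). Integrating over birth-time and degree trajectories then gives the linear-in-$k$ relation $\mathbb{E}[N_{c_1}(v) \mid c(v)=c_2,\, d(v)=k] \sim q_{c_1\mid c_2}\, k$ as $t \to \infty$. Inserting this into the size-biased ratio and using the color-conditional power-law tails $p^{c_2}_\infty(k)\propto k^{-\beta(c_2)}$ from \cite[Theorem 4.12]{avin2015homophily} reduces the claim to
\begin{equation*}
\frac{\sum_k k^2\, p^{c_2}_\infty(k)}{\sum_k k\, p^{c_2}_\infty(k)} \;\geq\; \frac{\sum_k k\, p^{c_2}_\infty(k)}{\sum_k p^{c_2}_\infty(k)},
\end{equation*}
which is the classical friendship paradox inequality and follows from Cauchy--Schwarz applied to $k^{1/2}$ and $k^{3/2}$ weighted by $p^{c_2}_\infty(k)$.

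The main obstacle is making the proportionality $\mathbb{E}[N_{c_1}(v) \mid c(v)=c_2, d(v)=k] \sim q_{c_1\mid c_2}\,k$ rigorous uniformly in $k$. The plan is to condition on the birth time of $v$, exploit the rich-get-richer growth of degrees established for the BPA model in \cite{avin2015homophily} to show that the fraction of $v$'s edges pointing to $c_1$-colored nodes concentrates around a color-dependent constant, and then average over the joint law of birth time and final degree; the same decomposition handles both the $c_1=c_2$ (same-color) and $c_1\neq c_2$ (cross-color) cases in one stroke. A minor subtlety is that in the power-law regime the size-biased mean can diverge; since the lemma compares $t$-indexed expectations in the limit, however, the inequality passes to the limit in the extended-real sense either way.
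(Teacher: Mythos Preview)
The paper does not actually supply a proof of this lemma: it is stated in Section~5 as one of two ``well-known network effects'' used only to give qualitative intuition for the simulation results on passive gains, and no argument (nor citation for this colored version) appears anywhere in the text or appendix. So there is no paper proof to compare against.

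On its own merits, your reduction is correct: writing $\mathbb{E}[F_{c_1c_2}(t)\mid\mathcal{G}]$ as the $N_{c_1}$-size-biased mean of $d(\cdot)$ over $V_{c_2}$ and recognizing that the inequality is exactly nonnegativity of the empirical covariance $\mathrm{Cov}_{v\in V_{c_2}}(N_{c_1}(v),d(v))$ is the right first move, and the law-of-total-covariance identity $\mathrm{Cov}(N_{c_1},d)=\mathrm{Cov}(\mathbb{E}[N_{c_1}\mid d],d)$ shows that your proportionality ansatz $\mathbb{E}[N_{c_1}\mid d=k]\sim q_{c_1\mid c_2}k$ would indeed deliver $q_{c_1\mid c_2}\mathrm{Var}(d)\ge 0$. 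The extended-real remark about divergent second moments is also apt, since the BPA tail exponents $\beta(c_2)$ can sit at or below $3$.

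The one genuine gap is the passage from the expectation-of-a-ratio $\mathbb{E}_{\mathcal{G}}\bigl[\sum N_{c_1}d\,/\sum N_{c_1}\bigr]$ to the ratio-of-expectations $\sum k^2 p_\infty^{c_2}(k)/\sum k\,p_\infty^{c_2}(k)$; this requires concentration of the empirical sums, not just the conditional-mean proportionality you highlight. A route more consonant with the paper's own machinery would be to note that $\sum_{v\in V_{c_2}}N_{c_1}(v)d(v)=M^{(1)}_{c_2,*c_1}(t)$ in the notation of Table~\ref{notations}, and then read off asymptotics directly from the recursions in the proof of Lemma~\ref{lem-M-order} together with the edge-count asymptotics governed by $\alpha^*$ and the $\beta_i$'s; that sidesteps both the proportionality step and the ratio-concentration issue in one stroke.
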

\begin{lemma}(Theorem 4.1 \cite{avin2015homophily}, glass-ceiling effect)
	Let $\{\mathcal{G}(N_{t-1}, t-1, r, \rho)\}$ be a sequence of graphs generated through the BPA model with $r<1/2$. Let $\text{top}_k(R)$ ($\text{top}_k(B)$) denote the number of red (blue) nodes that have degree at least $k$ in a network. There exists an increasing function $k(n)$ such that $\lim_{t\rightarrow\infty} \text{top}_k(B) = \infty$ and 
	\begin{equation}
		\lim_{t\rightarrow \infty} \frac{\text{top}_k(R)}{\text{top}_k(B)} = 0.
	\end{equation}
\end{lemma}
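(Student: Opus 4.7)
The plan is to prove the glass-ceiling effect by leveraging the power-law degree distributions already established for the BPA model. By Theorem 4.12 of \cite{avin2015homophily}, the probability $p^r_t(k)$ (respectively $p^b_t(k)$) that a randomly selected red (blue) node has degree $k$ converges as $t \to \infty$ to a law proportional to $k^{-\beta(R)}$ (respectively $k^{-\beta(B)}$), with $\beta(R) > \beta(B) > 2$. Because the larger blue group benefits more from both rich-get-richer and homophilic attachment, its degree distribution has a strictly heavier tail, and this exponent gap is the source of the effect.

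First I would translate the per-node tail probabilities into expected node counts. By linearity of expectation and integration of the power-law tail,
\begin{equation}
\mathbb{E}[\text{top}_k(R)] = \Theta\!\left(r\, t\, k^{-(\beta(R)-1)}\right), \qquad \mathbb{E}[\text{top}_k(B)] = \Theta\!\left((1-r)\, t\, k^{-(\beta(B)-1)}\right),
\end{equation}
so their ratio is
\begin{equation}
\frac{\mathbb{E}[\text{top}_k(R)]}{\mathbb{E}[\text{top}_k(B)]} = \Theta\!\left(\frac{r}{1-r}\, k^{-(\beta(R)-\beta(B))}\right),
\end{equation}
which tends to $0$ as $k \to \infty$ because $\beta(R) - \beta(B) > 0$.

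Next I would construct the threshold function. I want $\text{top}_{k(n)}(B) \to \infty$, i.e.\ $n \cdot k(n)^{-(\beta(B)-1)} \to \infty$, which holds whenever $k(n) = o\bigl(n^{1/(\beta(B)-1)}\bigr)$; simultaneously I need $k(n) \to \infty$ fast enough that $k(n)^{-(\beta(R)-\beta(B))} \to 0$. Any choice of the form $k(n) = n^{\gamma}$ with $0 < \gamma < 1/(\beta(B)-1)$ satisfies both requirements and produces the desired separation.

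The final and hardest step is lifting these expectation-level statements to almost-sure or in-probability statements about $\text{top}_{k(n)}(R)$ and $\text{top}_{k(n)}(B)$ themselves. Because edges in a preferential-attachment process are created sequentially and degree updates across nodes are highly correlated, one cannot invoke straightforward independence arguments. I would instead use the Doob-martingale / Azuma--Hoeffding approach standard in the preferential-attachment literature: expose the attachment choices one time-step at a time, bound the bounded-difference coefficient of $\text{top}_k(\cdot)$ under a single such change, and deduce a concentration inequality sharp enough to conclude that $\text{top}_{k(n)}(B) \to \infty$ almost surely while $\text{top}_{k(n)}(R)/\text{top}_{k(n)}(B) \to 0$. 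This concentration step is the principal obstacle; once it is in place, combining it with the ratio estimate above finishes the proof.
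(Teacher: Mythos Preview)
The paper does not prove this lemma. It is stated in Section~5 as a quotation of Theorem~4.1 from \cite{avin2015homophily} and is used only as an intuitive explanation for the behavior of the popularity-driven strategy; no proof (or even proof sketch) appears anywhere in the manuscript or its appendix. So there is nothing in the paper to compare your attempt against.

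That said, your sketch is the natural route and is essentially how the result is obtained in the cited source: exploit the exponent gap $\beta(R)>\beta(B)$ from the degree-distribution theorem, pick a threshold $k(n)=n^{\gamma}$ with $\gamma$ small enough that $\mathbb{E}[\text{top}_{k(n)}(B)]\to\infty$ yet $k(n)\to\infty$, and then upgrade the expectation statements via concentration. One caution on the last step: a naive bounded-differences argument for $\text{top}_k$ in a preferential-attachment process is delicate, because altering a single attachment at time $s$ can change the future degree trajectory of two nodes for all $t>s$, so the Lipschitz coefficient is not obviously $O(1)$. The original proof handles this either by working with the limiting degree distribution directly (i.e., proving $p_t^c(k)\to p^c(k)$ with quantitative rates and then summing tails) or by a more careful martingale coupling; you have correctly flagged this as the principal obstacle, but your proposal does not yet indicate how to control that coefficient.
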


Now, for each of the constrained referral strategies, we provide our analysis below:

\textbf{random referral strategy:} The pair of blue lines in Figure \ref{passive simulation} shows that the bias found in the first hop for the passive gain over red nodes is always alleviated in the second hop. Intuitively, this phenomenon can be explained by the fact that the random selection is color-blind. Thus, the first hop selection alleviate the effect of homophily in the next round of selection, and therefore alleviate the bias found in the first hop.

\textbf{popularity-driven referral strategy:}
Through the pair of green lines of Figure \ref{passive simulation}, we observe that the bias found in the first hop for the passive gain over red nodes is always amplified in the second hop. A potential explanation is that: the intermediate referrer' degree can be large, not only by the Friendship Paradox, but also because the intermediate referrer has the highest degree among the neighbors of the initial referrer. Therefore, the intermediate referrers are more likely to reach to the most popular nodes in the network than the initial referrers. Because the glass-ceiling effect indicates that the most popular nodes in the network are mostly blue, the second hop reinforces the bias found in the first hop.

\textbf{acceptance-driven referral strategy:} we see in the pair of red lines of Figure \ref{passive simulation} that the bias in the second hop are always better than that in the second hop. Under the \emph{acceptance-driven referral strategy}, we see there are two counteracting factors that can affect the ratio for the second hop. First, because red nodes have lower degrees than blue nodes in average, the chance that a referral sent by a red node is accepted at the first hop is lower than a referral sent by a blue node. This gives red nodes larger opportunities to enter the second hop. On the other hand, as blue initiators are more likely to choose blue intermediate referrers, once entered the second hop, blue nodes are more likely to succeed in sending the referral to the new referees. When the former factor overcomes the later factor, the bis found in 1-hop would be released in the 2-hop.
\section{Conclusion}
The recent web-based form referral programs have made today's product promotion easier than ever. At the same time, the historical bias in the population evolves quickly with the simplest network primitives as well as complex algorithmic rules. As the first to examine the fairness in multi-hop referral programs, our empirical results suggest and theory results confirm, leveraging higher-hop referral programs can amplify bias or, more interestingly, contribute to fairness. It primarily depends on which referal strategy users employ and whether active or passive gains are considered. For constrained strategies, networks with varying degrees of inherent biases and homophily agree in their direction of bias change. When removing strategic effects and facing the referral programs directly to the network structual bias, we find that such an agreement no long holds, and we prove a necessary and sufficient condition under which the bias found in 1-hop is alleviated in the higher hops for such referral programs.

This observation is critical, as previous literature analyzing the impact of algorithms on the bias of social networks primarily focuses on users' 1-hop neighbors, while our results indicate that understanding solely the 1-hop behavior is not sufficient to conclude the impact social network algorithms have in long run. This paper makes the first attempt to understand the bias in higher-hop neighbors. 

We hope our work can serve as a starting point for the bias analysis in multi-hop referral programs, as well as in long-run analysis for other network algorithms.

\bibliographystyle{unsrt}
\bibliography{draft2111.bib}
\appendix

\section{Notations}
For clarity, we list all notations that are used in our theory presentation in Table \ref{notations}.

\begin{table*}
    \begin{tabularx}{\textwidth}{p{0.22\textwidth}X}
    \toprule
      $V_C(t)$ & the set of nodes in color $C$ at time $t$;\\
      $d^{(k)}_i(t)$ & the $k$-hop degree of node $i$ at time $t$;\\
      $d^{(k)}_{i,C}(t)$ & the number of nodes in color $C$ among the $k$-hop neighbors of node $i$ at time $t$;\\
   	  $d^{(k)}_{i,C*}(t)$ & among the $k$-hop neighbors of node $i$ at time $t$, the number of nodes from which the second node on the unique path to $i$ is of color $C$;\\
   	  $D^{(k)}_C(t)$ & sum of $k$-hop degrees over nodes of color $C$ at time $t$; that is, $\sum_{i \in V_C(t)}d^{(k)}_i(t)$;\\
   	   $\alpha(t)$ & fraction of the sum of 1-hop degree for red nodes over that for all nodes; that is, $D^{(1)}_R(t)/(2t)$;\\
   	   $M^{(k)}_{C_1,**}(t)$ & $\sum_{i \in V_{C_1}(t)}d^{(1)}_i(t)d^{(k)}_i(t)$;\\
   	   $M^{(k)}_{C_1,**}(t)$ & $\sum_{i \in V_{C_1}(t)}d^{(1)}_i(t)d^{(k)}_i(t)$;\\
   	 $M^{(k)}_{C_1,*C_2}(t)$ & $\sum_{i \in V_{C_1}(t)}d^{(1)}_i(t)d^{(k)}_{i,C_2}(t)$; \\
		$M^{(k)}_{C_1,C_2*}(t)$ & $\sum_{i \in V_{C_1}(t)}d^{(1)}_i(t)d^{(k)}_{i,C_2*}(t)$ with $M^{(1)}_{C_1,C_2*}(t):= M^{(1)}_{C_1, **}(t)$.
   	 \\
    \bottomrule
     \end{tabularx}
     \caption{Notation}\label{notations}
     \vspace{-7mm}
    \end{table*}

\section{Proof of Theorem \ref{kth thm}}\label{ap B}
\begin{reptheorem}{kth thm}[extended]
	For a sequence of networks $\mathcal{G}(N_t, t, r, \rho)$ generated by the BPA model with $r<0.5$, 	for any $k>1$,
	\begin{align}
	\lim_{t\rightarrow \infty}\frac{\mathbb{E}[AG_{\mathcal{S}_L(t)}^{(k)}(R)]}{\mathbb{E}[AG_{\mathcal{S}_L(t)}^{(k)}(B)]} =\frac{\beta_2}{ \beta_3},
	\end{align}
where
	\begin{align}
	\beta_2 = \frac{r \rho}{2(1-(1-\alpha^*)(1-\rho))}, \\
	\beta_3 = \frac{1-r}{2(1-\alpha^*(1-\rho))},
\end{align}
and $\alpha^*$ is the unique solution in $[0,1]$ of the equation 
\begin{align}\label{alpha_eqq}
	2 \alpha^{*}=1-(1-r) \frac{\left(1-\alpha^{*}\right)}{1-\alpha^{*}(1-\rho)}+r \frac{\alpha^{*}}{1-\left(1-\alpha^{*}\right)(1-\rho)}.
\end{align}
\end{reptheorem}
\begin{proof}
	
For sequences $\{a_t, t \geq 1\}$, $\{b_t, t \geq 1\}$ and $\{c_t, t \geq 1\}$, we denote by $c_t = o_t(1)$ if $\lim_{t \rightarrow \infty} c_t = 0$; we denote by $a_t = O(b_t)$ if there exists some $C \neq 0$ such that $a_t/b_t - C = o_t(1)$.

For the growth model, at each time $t$, there are 4 possibilities: 
	\begin{itemize}
		\item case 1: a new red node connects a red node;
		\item case 2: a new red node connects a blue node;
		\item case 3: a new blue node connects a blue node;
		\item case 4: a new blue node connects a red node. 
	\end{itemize}

In BPA model, the probability for above cases are:
\begin{align}
	\mathbb{P}(\text{case 1 at time t}) &= \frac{r \alpha(t)}{2(1-(1-\alpha(t))(1-\rho))}, \\
	\mathbb{P}(\text{case 2 at time t}) &= \frac{r (1-\alpha(t)) \rho}{2(1-(1-\alpha(t))(1-\rho))}, \\
	\mathbb{P}(\text{case 3 at time t}) &= \frac{(1-r) (1-\alpha(t))}{2(1-\alpha(t)(1-\rho))}, \\
	\mathbb{P}(\text{case 4 at time t}) &= \frac{(1-r) \alpha(t) \rho}{2(1-\alpha(t)(1-\rho))}.
\end{align}

We can write a recursion for $D^{(k)}_C(t)$ by considering those 4 cases. Denote by $\{\mathcal{F}_t, t \geq 1\}$ the filtration such that $\mathcal{F}_t$ contains the information of the graph at time $t$. We have  
\begin{align}
	\label{eq_Dk_recur_1}
	&\mathbb{E}[D^{(k)}_R(t+1) \mid \mathcal{F}_t]
	= D^{(k)}_R(t) + \beta_1(t) \frac{M^{(k-1)}_{R,**}(t) + M^{(k-1)}_{R,*R}(t)}{t}\\
	&+ \beta_2(t) \frac{M^{(k-1)}_{B,**}(t) + M^{(k-1)}_{B,*R}(t)}{t}  + \beta_3(t) \frac{M^{(k-1)}_{B,*R}(t)}{t} + \beta_4(t) \frac{M^{(k-1)}_{R,*R}(t)}{t},
\end{align}
where
\begin{align}
	\beta_1(t) = \frac{r}{2(1-(1-\alpha(t))(1-\rho))}, \\
	\beta_2(t) = \frac{r \rho}{2(1-(1-\alpha(t))(1-\rho))}, \\
	\beta_3(t) = \frac{1-r}{2(1-\alpha(t)(1-\rho))}, \\
	\beta_4(t) = \frac{(1-r) \rho}{2(1-\alpha(t)(1-\rho))}.
\end{align}
From \cite{avin2015homophily}, we see that $\alpha(t) \rightarrow \alpha^*$ almost surely, where $\alpha^*$ satisfies the following equation.
\begin{equation}
	2 \alpha^* = \frac{2 r \alpha^* + r (1-\alpha^*) \rho}{1 - (1-\alpha^*) (1-\rho)}
				+ \frac{ (1-r) \alpha^* \rho}{1 - \alpha^* (1-\rho)}.
\end{equation}
Hence we also have that $\beta_i(t) \rightarrow \beta_i$ almost surely for $i = 1,2,3,4$, where

\begin{align}
	\beta_1 = \frac{r}{2(1-(1-\alpha^*)(1-\rho))}, \\
	\beta_2 = \frac{r \rho}{2(1-(1-\alpha^*)(1-\rho))}, \\
	\beta_3 = \frac{1-r}{2(1-\alpha^*(1-\rho))}, \\
	\beta_4 = \frac{(1-r) \rho}{2(1-\alpha^*(1-\rho))}.
\end{align}

Taking expectations on both sides of (\ref{eq_Dk_recur_1}), we get

\begin{align}
	\label{eq_Dk_recur_1}
	&\mathbb{E}[D^{(k)}_R(t+1)]
	= \mathbb{E}[D^{(k)}_R(t)] + \mathbb{E}\left[\beta_1(t) \frac{M^{(k-1)}_{R,**}(t) + M^{(k-1)}_{R,*R}(t)}{t}\right]\\
	&+ \mathbb{E}\left[\beta_2(t) \frac{M^{(k-1)}_{B,**}(t) + M^{(k-1)}_{B,*R}(t)}{t} \right] + \mathbb{E}\left[\beta_3(t) \frac{M^{(k-1)}_{B,*R}(t)}{t}\right] \\
	&+ \mathbb{E}\left[\beta_4(t) \frac{M^{(k-1)}_{R,*R}(t)}{t}\right].
\end{align}
Similarly,
\begin{align}
	&\mathbb{E}[D^{(k)}_B(t+1)]
	= \mathbb{E}[D^{(k)}_B(t)] + \mathbb{E}\left[\beta_3(t) \frac{M^{(k-1)}_{B,**}(t) + M^{(k-1)}_{B,*B}(t)}{t}\right]\\
	&+ \mathbb{E}\left[\beta_4(t) \frac{M^{(k-1)}_{R,**}(t) + M^{(k-1)}_{R,*B}(t)}{t} \right] 
	+ \mathbb{E}\left[\beta_1(t) \frac{M^{(k-1)}_{R,*B}(t)}{t} \right] \\
	&+ \mathbb{E}\left[ \beta_2(t) \frac{M^{(k-1)}_{B,*B}(t)}{t} \right].
\end{align}

We have the following lemma.

\begin{lemma}
	\label{lem_exp_ratio}
	For any color $C \in \{R, B\}$, any $i \in \{1,2,3,4\}$ and any $k$, we have that
	\begin{equation}
		\frac{\mathbb{E}\left[ \beta_i(t) M^{(k-1)}_{C,**}(t) \right]}
		{\beta_i \mathbb{E}\left[ M^{(k-1)}_{C,**}(t) \right]}
		= 1 + o_t(t^{-1/4}).
	\end{equation}
\end{lemma}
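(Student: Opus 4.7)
The plan is to show that $\beta_i(t)$ concentrates around its limit $\beta_i$ at a rate fast enough that it can be pulled out of the expectation at relative cost $o_t(t^{-1/4})$. Since $\beta_i(t)$ depends on the growing graph only through the single random scalar $\alpha(t)$, the task reduces to quantifying the concentration of $\alpha(t)$ around $\alpha^*$, together with a second-moment estimate for $M^{(k-1)}_{C,**}(t)$.

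First, I would observe that each $\beta_i(t) = \beta_i(\alpha(t))$ is the evaluation of an explicit rational function of $\alpha \in [0,1]$. Because $\rho > 0$, the two denominators $1-(1-\alpha)(1-\rho)$ and $1-\alpha(1-\rho)$ are bounded below by $\rho$, so each $\beta_i(\cdot)$ is $C^{\infty}$ on $[0,1]$. This yields a deterministic Lipschitz bound $|\beta_i(t) - \beta_i| \leq L(r,\rho)\,|\alpha(t) - \alpha^*|$ together with a uniform bound on $\beta_i(t)$. Next I would invoke a quantitative form of the almost-sure convergence $\alpha(t) \to \alpha^*$ established (in qualitative form) by the stochastic-approximation analysis behind Theorem 4.12 of Avin et al.: since the Doob martingale for $\alpha(t)$ has increments of size $O(1/t)$, Azuma--Hoeffding and the standard Lyapunov/ODE estimate give the bias bound $|\mathbb{E}[\alpha(t) - \alpha^*]| = O(1/t)$ and the variance bound $\mathrm{Var}(\alpha(t)) = O(1/t)$, both comfortably sharper than the $t^{-1/4}$ required.

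Third, I would split the target difference as
\begin{align*}
\mathbb{E}[\beta_i(t) M^{(k-1)}_{C,**}(t)] - \beta_i\, \mathbb{E}[M^{(k-1)}_{C,**}(t)]
= \mathbb{E}[\beta_i(t) - \beta_i]\cdot \mathbb{E}[M^{(k-1)}_{C,**}(t)] + \mathrm{Cov}\bigl(\beta_i(t),\, M^{(k-1)}_{C,**}(t)\bigr).
\end{align*}
The first summand is $O(1/t)\cdot \mathbb{E}[M^{(k-1)}_{C,**}(t)]$ by the bias bound. For the covariance, Cauchy--Schwarz combined with Lipschitzness gives
\begin{align*}
\bigl|\mathrm{Cov}(\beta_i(t),\, M^{(k-1)}_{C,**}(t))\bigr|
\leq L\sqrt{\mathrm{Var}(\alpha(t))}\cdot \sqrt{\mathrm{Var}(M^{(k-1)}_{C,**}(t))}.
\end{align*}
It then remains to establish, by induction on $k$, that $\mathrm{Var}(M^{(k-1)}_{C,**}(t)) = O\bigl(\mathbb{E}[M^{(k-1)}_{C,**}(t)]^2\bigr)$. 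The base case reduces to a classical second-moment estimate on $\sum_i d_i^{(1)}(t)^2$ in preferential attachment, and the inductive step propagates the bound through the same one-step recursion that governs the first moments, using Cauchy--Schwarz to control cross-color products. Dividing by $\beta_i\, \mathbb{E}[M^{(k-1)}_{C,**}(t)]$ then yields a total relative error of $O(t^{-1/2}) = o_t(t^{-1/4})$, as claimed.

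The main obstacle is the inductive variance bound. The summands $d_i^{(1)}(t)\,d_i^{(k-1)}(t)$ are correlated across nodes $i$ through shared paths in the evolving graph, and a newly attached edge simultaneously perturbs both the $1$-hop weight and the $(k-1)$-hop count, so the covariance bookkeeping becomes intricate as $k$ grows. A cleaner but looser alternative that avoids this step is to split on $A_t = \{|\alpha(t) - \alpha^*| \leq t^{-1/4 - \varepsilon}\}$: on $A_t$ the Lipschitz bound already forces $|\beta_i(t)-\beta_i| \leq Lt^{-1/4-\varepsilon}$, and an Azuma--Hoeffding tail of the form $\mathbb{P}(A_t^c) \leq Ce^{-ct^{1/2-2\varepsilon}}$ combined with any crude polynomial moment bound on $M^{(k-1)}_{C,**}(t)$ absorbs the contribution on $A_t^c$; this yields $o_t(t^{-1/4})$ without needing the sharper relative-variance estimate.
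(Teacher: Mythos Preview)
Your alternative route at the end---split on the event where $|\alpha(t)-\alpha^*|$ is small and absorb the complement with a crude bound on $M^{(k-1)}_{C,**}(t)$---is exactly the paper's proof, and you should lead with it rather than treat it as a fallback. The paper carries it out with two simplifications worth adopting. First, instead of rederiving the concentration of $\alpha(t)$, it simply cites Corollary~4.11 of \cite{avin2015homophily}, which gives $\mathbb{P}\bigl(|\alpha(t)-\alpha^*|>\sigma_t\bigr)<t^{-4}$ with $\sigma_t=\max\bigl(2(\log t)\,t^{-1/2},\,t^{-1/3}\bigr)$; your Lipschitz observation on $\alpha\mapsto\beta_i(\alpha)$ then transfers this tail to $\beta_i(t)-\beta_i$. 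Second, on the bad event the paper uses no moment bound at all but the trivial \emph{deterministic} inequality $M^{(k-1)}_{C,**}(t)\le t^{3}$ (at most $t$ nodes, each with $d^{(1)}\le t$ and $d^{(k-1)}\le t$), so that piece is at most $Ct^{3}\cdot t^{-4}=Ct^{-1}$. After dividing by $\beta_i\,\mathbb{E}[M^{(k-1)}_{C,**}(t)]$, the good-event piece contributes relative error $O(\sigma_t)=o_t(t^{-1/4})$ and the bad-event piece is negligible.

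Your primary route via the bias--covariance decomposition and an inductive bound $\mathrm{Var}\bigl(M^{(k-1)}_{C,**}(t)\bigr)=O\bigl(\mathbb{E}[M^{(k-1)}_{C,**}(t)]^{2}\bigr)$ is defensible but substantially heavier than needed; the deterministic $t^{3}$ bound makes the variance estimate entirely superfluous. There is also a genuine gap in your alternative as written: the Azuma tail $\mathbb{P}(A_t^c)\le Ce^{-ct^{1/2-2\varepsilon}}$ does \emph{not} follow from a bounded-differences argument alone. The natural martingale for $\alpha(t)$ has increments of order $1/s$, whose squared sum is $\sum_s O(1/s^{2})=O(1)$, so Azuma only yields a sub-Gaussian width of constant order---it does not give vanishing concentration about $\alpha^*$ by itself. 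The sharper bound in \cite{avin2015homophily} comes from exploiting the contractive drift of the stochastic approximation toward $\alpha^*$ (early fluctuations are damped over time). Since their Corollary~4.11 already packages this into a polynomial tail at scale $\sigma_t=o(t^{-1/4})$, you should cite it directly rather than invoke Azuma.
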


Using the above lemma, we get
\begin{align}
&	\mathbb{E}[D^{(k)}_R(t+1)]
	= \mathbb{E}[D^{(k)}_R(t)] + \beta_1 (1 + o_t(1))\frac{\mathbb{E}\left[M^{(k-1)}_{R,**}(t)\right] + \mathbb{E}\left[M^{(k-1)}_{R,*R}(t)\right]}{t}\\
	&+ \beta_2 (1 + o_t(1))\frac{\mathbb{E}\left[M^{(k-1)}_{B,**}(t)\right] + \mathbb{E}\left[M^{(k-1)}_{B,*R}(t)\right]}{t}  \\
	&+  \beta_3(1 + o_t(1))\frac{\mathbb{E}\left[M^{(k-1)}_{B,*R}(t)\right]}{t} + \beta_4 (1 + o_t(1))\frac{\mathbb{E}\left[M^{(k-1)}_{R,*R}(t)\right]}{t}.
\end{align}

\begin{align}
	&\mathbb{E}[D^{(k)}_B(t+1)]
	= \mathbb{E}[D^{(k)}_B(t)] + \beta_3 (1 + o_t(1)) \frac{\mathbb{E}\left[M^{(k-1)}_{B,**}(t)\right] + \mathbb{E}\left[ M^{(k-1)}_{B,*B}(t)\right]}{t} \\
	&+ \beta_4 (1 + o_t(1))\frac{\mathbb{E}\left[M^{(k-1)}_{R,**}(t)\right] + \mathbb{E}\left[M^{(k-1)}_{R,*B}(t)\right]}{t}  \\
	& + \beta_1 (1 + o_t(1)) \frac{\mathbb{E}\left[M^{(k-1)}_{R,*B}(t)\right]}{t}  + 
	 \beta_2 (1 + o_t(1)) \frac{\mathbb{E}\left[M^{(k-1)}_{B,*B}(t)\right]}{t}.
\end{align}

Assume that $r < 1/2$. We need the following lemmas.
\begin{lemma}
	\label{lem-M-order}
	We have that, for $k \geq 1$,
	\begin{align}
		\mathbb{E} [M^{(k)}_{B,**}(t) ] &= O((\log t)^{k-1} t^{2 \beta_2 + 2 \beta_3}), \\
		\mathbb{E} [M^{(k)}_{B,B*}(t) ] &= \mathbb{E} [M^{(k)}_{B,**}(t) ] (1+o_t(1)),\\
		\mathbb{E} [M^{(k)}_{B,*B}(t) ] &= \frac{\beta_3}{\beta_2 + \beta_3} \mathbb{E} [M^{(k)}_{B,**}(t) (1+o_t(1)) ],\\
		\mathbb{E} [M^{(k)}_{R,**}(t) ] &= o_t(\mathbb{E} [M^{(k)}_{B,**}(t) ]).
	\end{align}
\end{lemma}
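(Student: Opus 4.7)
The plan is to prove all four claims simultaneously by induction on $k$, exploiting the fact that BPA produces a tree (each new node brings exactly one edge), so $k$-hop neighborhoods update in a particularly clean way when a new edge $(u,w)$ is added: for the new node, $d^{(k)}_u(t+1) = d^{(k-1)}_w(t)$ for $k \geq 1$, and for an existing node $v$, $d^{(k)}_v$ increases by exactly $\mathbf{1}[\mathrm{dist}(v,w) = k-1]$. This tree structure is what allows the various $M^{(k)}$ variants to be expressed in terms of lower-$k$ variants via local updates.

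For the base case $k=1$, a direct BPA rate computation shows that a specific blue node $v$ born at time $s_v$ has expected degree growing like $(t/s_v)^{\beta_2+\beta_3}$: indeed, $v$ receives a new edge only in case 2 or case 3, in both of which it is picked proportionally to $d_v/D^{(1)}_B$, and plugging in $D^{(1)}_B(t) = 2(1-\alpha(t))t + o(t)$ together with the limiting $\beta_i$'s yields $\mathbb{E}[d_v(t+1) \mid \mathcal{F}_t] = d_v(t)\bigl(1 + (\beta_2+\beta_3 + o_t(1))/t\bigr)$. Summing $d_v(t)^2$ over blue birth times then gives $M^{(1)}_{B,**}(t) = O(t^{2(\beta_2+\beta_3)})$. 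The identity $M^{(1)}_{B,B*} \equiv M^{(1)}_{B,**}$ is the stated convention. For $M^{(1)}_{B,*B}$, the same rate argument applied to the blue-only degree $d^{(1)}_{v,B}$ shows it grows proportionally to $d_v$ with factor $\beta_3/(\beta_2+\beta_3)$, since among the new neighbors a blue gets, a fraction $\beta_3/(\beta_2+\beta_3)$ arrives via case 3 (hence are themselves blue), yielding the claimed prefactor. Finally, repeating the degree-growth computation for a red node gives exponent $\beta_1+\beta_4$, and the strict inequality $\beta_1+\beta_4 < \beta_2+\beta_3$ (which one checks directly from the definitions of $\beta_i$ using $r<1/2$ and equation~(\ref{alpha_eqq})) yields $M^{(1)}_{R,**}(t) = o(M^{(1)}_{B,**}(t))$.

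For the inductive step, using the tree-update rules I would derive a coupled system of recursions of the form
\begin{align*}
\mathbb{E}[M^{(k)}_{B,**}(t+1) - M^{(k)}_{B,**}(t) \mid \mathcal{F}_t] = \tfrac{\beta_2+\beta_3 + o_t(1)}{t}\, M^{(k)}_{B,**}(t) + \tfrac{c\,(1+o_t(1))}{t}\, M^{(k-1)}_{B,**}(t) + R_k(t),
\end{align*}
where the leading forcing comes from (a) the new blue node $u$ contributing $d_u d^{(k)}_u = d^{(k-1)}_w$, averaged over $w$ chosen with probability $\propto d_w/D^{(1)}_B$, producing an $M^{(k-1)}_{B,**}/D^{(1)}_B$ term, and (b) the increment in $d^{(k)}_v$ for existing blue $v$ at distance $k-1$ from $w$. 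The remainder $R_k(t)$ collects subleading pieces involving $M^{(k-1)}_{B,*B}$, $M^{(k-1)}_{R,**}$, and a pair-correlation term $\sum_{v,w\,\mathrm{blue},\,\mathrm{dist}=k-1} d_v d_w$. By the induction hypothesis the forcing is $O((\log t)^{k-2} t^{2(\beta_2+\beta_3)-1})$; solving the linear recursion $f(t+1)-f(t) = (\gamma + o_t(1)) f(t)/t + O((\log t)^{k-2} t^{2\gamma-1})$ with $\gamma = \beta_2+\beta_3$ by variation of parameters yields $f(t) = O((\log t)^{k-1} t^{2\gamma})$, the extra $\log t$ arising from the resonance between the intrinsic growth rate $\gamma/t$ and the forcing exponent. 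Parallel recursions handle $M^{(k)}_{B,B*}, M^{(k)}_{B,*B}, M^{(k)}_{R,**}$: the ``$B*$'' variant shares the same leading forcing as ``$**$'' and hence the same asymptotic (since blue intermediate neighbors dominate red ones by exactly the degree-growth gap from the base case), the ``$*B$'' variant picks up the limiting fraction $\beta_3/(\beta_2+\beta_3)$ by iterating the base-case argument, and the red recursion is driven by the strictly slower exponential $t^{2(\beta_1+\beta_4)}$.

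The main obstacle will be the joint bookkeeping: the pair-correlation term in $R_k$ is not itself of the $M^{(\cdot)}$ form, so I would introduce auxiliary quantities $N^{(k)}_{C_1,C_2}(t) := \sum_{v \in V_{C_1},\, w \in V_{C_2},\, \mathrm{dist}(v,w)=k} d_v(t) d_w(t)$ and close the induction on the enlarged system $(M, N)$, proving for $N$ the analogous bound $N^{(k)}_{B,B}(t) = O((\log t)^{k-1} t^{2(\beta_2+\beta_3)} \cdot M^{(1)}_{B,**}(t)/D^{(1)}_B(t))$. Lemma~\ref{lem_exp_ratio} is what makes the whole scheme tractable: it decouples the random coefficients $\beta_i(t)$ from the correlated $M$-quantities up to a negligible $t^{-1/4}$ multiplicative error, so each inductive step reduces to analyzing a deterministic scalar linear recursion with explicit forcing, and the claimed $(\log t)^{k-1} t^{2(\beta_2+\beta_3)}$ asymptotic together with the sub-dominance $M^{(k)}_{R,**} = o(M^{(k)}_{B,**})$ then follow by propagating the strict gap $\beta_1+\beta_4 < \beta_2+\beta_3$ through every level.
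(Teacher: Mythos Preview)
Your inductive scheme has the right shape, but the recursion you write for $M^{(k)}_{B,**}$ miscounts the self-coefficient, and this breaks the resonance that is supposed to produce the $(\log t)^{k-1}$. You keep only the contribution from $d_w$ increasing by one (yielding $\tfrac{\beta_2+\beta_3}{t}\,M^{(k)}_{B,**}$) and push the ``pair-correlation'' piece $\sum_{v,w\ \text{blue},\ \mathrm{dist}=k-1} d_v d_w$ into the remainder $R_k$. But that piece is \emph{not} lower order. On a tree, each blue $w$ at distance $k-1$ from $v$ has exactly one neighbor at distance $k-2$ and all others at distance $k$, so
\[
\sum_{w\ \text{blue}:\ \mathrm{dist}(v,w)=k-1} d_w \;=\; d^{(k-1)}_{v,B} + d^{(k)}_{v,B*},
\]
and multiplying by $d_v$ and summing over blue $v$ gives exactly $M^{(k-1)}_{B,*B}+M^{(k)}_{B,B*}$. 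The second summand is asymptotically $M^{(k)}_{B,**}$ itself, so after averaging over the attachment point the true self-coefficient is $\tfrac{2(\beta_2+\beta_3)}{t}$, not $\tfrac{\beta_2+\beta_3}{t}$. With your coefficient there is no resonance between the homogeneous rate $\gamma=\beta_2+\beta_3$ and the forcing exponent $2\gamma$ (variation of parameters gives $t^{\gamma}\!\int(\log s)^{k-2}s^{\gamma-1}ds \sim (\log t)^{k-2}t^{2\gamma}$, no extra logarithm), so the induction collapses at the first step. Your proposed auxiliary $N^{(k)}_{B,B}$ is therefore not an independent quantity at all: it equals $M^{(k)}_{B,*B}+M^{(k+1)}_{B,B*}$, and the growth bound you conjecture for it is incorrect.

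The paper's remedy is precisely to take $M^{(k)}_{B,B*}$ rather than $M^{(k)}_{B,**}$ as the primary variable in the inductive step. Because the penultimate-color marker $B*$ already encodes the tree identity above, the recursion for $M^{(k)}_{B,B*}$ closes within the $M$-family with self-coefficient $\tfrac{2(\beta_2+\beta_3)}{t}$ and forcing coming purely from level $k-1$; Lemma~\ref{lem-recursion-order} in the resonant case $c_1=c_3$ then delivers the extra $\log t$. A smaller issue: your base case infers $\mathbb{E}[M^{(1)}_{B,**}]=\sum\mathbb{E}[d_v^2]$ from the per-node first moment $\mathbb{E}[d_v]\sim(t/s_v)^{\beta_2+\beta_3}$, which leaves a second-moment gap; the paper instead writes the aggregate recursion for $\mathbb{E}[M^{(1)}_{B,**}(t)]$ directly and applies Lemma~\ref{lem-recursion-order}.
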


\begin{lemma}
	\label{lem-recursion-order}
	For a sequence $\{a_t\}$ satisfying
	\begin{equation}
		a_{t+1} = a_t + c_1 d_t \frac{a_t}{t} + c_2 b_t, 
	\end{equation}
	where $d_t = (1+o_t(t^{-\epsilon}))$ for some $\epsilon > 0$, $b_t = (\log t)^m t^{c_3 - 1} (1+o_t(1))$. 
	We then have that
	\begin{equation}
		a_t = 
		\begin{cases}
			O(t^{c_1}), &\text{if $c_1 > c_3$} \\
			O((\log t)^{m+1} t^{c_3}), &\text{if $c_1 = c_3$} \\
			\frac{c_2}{c_3-c_1} (\log t)^m t^{c_3} (1+o_t(1)), &if c_1 < c_3.
		\end{cases}
	\end{equation}
\end{lemma}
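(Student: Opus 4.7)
The plan is to solve the recursion by a discrete variation-of-constants formula. I would first introduce $\Pi_t := \prod_{s=1}^{t-1}(1 + c_1 d_s / s)$, which is the solution of the associated homogeneous recursion $\Pi_{s+1} = \Pi_s(1 + c_1 d_s / s)$. Using $\log(1+x) = x + O(x^2)$, the hypothesis $d_s = 1 + o_s(s^{-\epsilon})$, and $\sum_{s=1}^{t-1} s^{-1} = \log t + O(1)$, I would derive $\log \Pi_t = c_1 \log t + O(1)$, and hence $\Pi_t = K t^{c_1}(1 + o_t(1))$ for some positive constant $K$.

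Next, dividing the recursion by $\Pi_{t+1}$ gives the telescoping identity
\begin{equation}
a_t = \Pi_t \left( \frac{a_1}{\Pi_1} + \sum_{s=1}^{t-1} \frac{c_2\, b_s}{\Pi_{s+1}} \right).
\end{equation}
Substituting the asymptotics $\Pi_t / \Pi_{s+1} \sim (t/s)^{c_1}$ (valid up to a $1+o(1)$ factor as $s \to \infty$) and $b_s = (\log s)^m s^{c_3 - 1}(1 + o_s(1))$, the main contribution becomes
\begin{equation}
a_t = c_2\, t^{c_1} \sum_{s=1}^{t-1} (\log s)^m s^{c_3 - c_1 - 1}(1 + o_s(1)) + O(t^{c_1}).
\end{equation}

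The three cases then follow from standard asymptotics of $S_t := \sum_{s=1}^{t-1}(\log s)^m s^{c_3 - c_1 - 1}$, which I would obtain by comparison with $\int_1^t (\log x)^m x^{c_3 - c_1 - 1}\, dx$. If $c_1 > c_3$, the sum converges, so $a_t = O(t^{c_1})$. If $c_1 = c_3$, the integral evaluates to $(\log t)^{m+1}/(m+1)$, so $a_t = O((\log t)^{m+1} t^{c_3})$. If $c_1 < c_3$, integration by parts gives $S_t = \frac{1}{c_3 - c_1}(\log t)^m t^{c_3 - c_1}(1 + o_t(1))$, from which $a_t = \frac{c_2}{c_3 - c_1}(\log t)^m t^{c_3}(1 + o_t(1))$ follows, noting that the $O(t^{c_1})$ term is absorbed because $c_1 < c_3$.

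The hard part will be controlling the accumulation of the $(1 + o_s(1))$ corrections in the variation-of-constants sum in the sharp case $c_1 < c_3$, where a precise asymptotic is required rather than just a big-$O$ bound. I expect to handle this by splitting the sum at some $s_0 = t^\eta$ with $\eta \in (0,1)$ small: the head $s < s_0$ contributes only $O((\log t)^m t^{\eta(c_3 - c_1)})$, which is negligible relative to $t^{c_3 - c_1}$; in the tail $s \geq s_0$, the correction factors are uniformly $1 + o_t(1)$ and can be pulled out of the sum, preserving the leading coefficient $1/(c_3 - c_1)$. A parallel argument will absorb the $o_t(1)$ arising from the approximation $\Pi_t = K t^{c_1}(1 + o_t(1))$.
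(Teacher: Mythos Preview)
Your proposal is correct and follows essentially the same route as the paper: unroll the recursion via the product $\prod_{j=s+1}^{t-1}(1+c_1 d_j/j)$, show this product is $(t/s)^{c_1}(1+o_s(1))$ using $\log(1+x)=x+O(x^2)$ and the summability granted by $d_s = 1+o_s(s^{-\epsilon})$, and then reduce to the asymptotics of $\sum_{s=1}^{t-1}(\log s)^m s^{c_3-c_1-1}$. Your variation-of-constants framing and the explicit initial-condition term $\Pi_t\, a_1/\Pi_1 = O(t^{c_1})$ are a minor packaging difference, and your remark about splitting the sum to control the $(1+o_s(1))$ factors in the case $c_1<c_3$ makes explicit a step the paper leaves implicit.
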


By Lemma \ref{lem-M-order}, we see that
\begin{align}
	\mathbb{E}[D^{(k)}_R(t+1)]
	&= \mathbb{E}[D^{(k)}_R(t)] + 2 \beta_2 \frac{\mathbb{E}[M^{(k-1)}_{B,**}(t)]}{t} (1+o_t(1)).
\end{align}
Similarly, we have
\begin{align}
	\mathbb{E}[D^{(k)}_B(t+1)]
	&= \mathbb{E}[D^{(k)}_B(t)] + 2 \beta_3 \frac{\mathbb{E}[M^{(k-1)}_{B,**}(t)]}{t} (1+o_t(1)).
\end{align}
By Lemma \ref{lem-recursion-order} we have that
\begin{align}
	\frac{\mathbb{E}[D^{(k)}_R(t)]}{\mathbb{E}[D^{(k)}_B(t)]}
	 = \frac{\beta_2}{\beta_3} (1+o_t(1)).
\end{align}
\end{proof}

\section{Proof of Corollary \ref{2th less than r}}\label{ap C}
\begin{repcorollary}{2th less than r}[extended]
For a sequence of networks $\mathcal{G}(N_t, t, r, \rho)$ generated by the BPA model with $r<0.5$, under the \emph{linear referral strategy}, as $t\rightarrow \infty$, the ratio of the expected sum for k-hop degrees among red nodes over that among all nodes is no greater the minority population ratio $r$. That is,
	\begin{align}
	\frac{\beta_2}{\beta_2 + \beta_3} \leq r,
	\end{align}
where $\beta_2, \beta_3$ are defined as in Theorem \ref{kth thm}.
\end{repcorollary}
\begin{proof}
By the definition of $\beta_2$ and $\beta_3$, we know
\begin{align}
	\frac{\beta_2}{\beta_2 + \beta_3} \leq r & \Leftrightarrow \frac{\rho}{1-(1-\alpha^*)(1-\rho)} \leq \frac{1}{1-\alpha^*(1-\rho)}\\
	& \Leftrightarrow \rho -\alpha^* \rho(1-\rho) \leq 1- (1-\alpha^*)(1-\rho)\\
	& \Leftrightarrow
	    1- \rho - (1-\rho)(1-\alpha^* - \alpha^* \rho) 
	    \geq 0\\
	& \Leftrightarrow \alpha^* (1+\rho)(1-\rho) \geq 0,
\end{align}
which is clearly true by the definition of $\alpha^*$ and $\rho$.
\end{proof}

\section{Proof of Corollary \ref{limit}}\label{ap D}
\begin{repcorollary}{limit}[extended]
For a sequence of networks $\mathcal{G}(N_t, t, r, \rho)$ generated by the BPA model with $r<0.5$, as $\rho \rightarrow 0$,
\begin{align}
	\lim_{t\rightarrow \infty}\frac{\mathbb{E}[AG_{\mathcal{S}_L(t)}^{(1)}(R)]}{\mathbb{E}[AG_{\mathcal{S}_L(t)}^{(1)}(B)]}
	 & = \frac{r}{1-r} \text{ and }
	 \lim_{t\rightarrow \infty}\frac{\mathbb{E}[AG_{\mathcal{S}_L(t)}^{(k)}(R)]}{\mathbb{E}[AG_{\mathcal{S}_L(t)}^{(k)}(B)]} = 0.
\end{align}
\end{repcorollary}
\begin{proof}
From \cite{avin2015homophily}, we know that as $t\rightarrow \infty$, $\frac{\mathbb{E}[D^{(1)}_R(t)]}{\mathbb{E}[D^{(1)}_B(t)]} = \frac{\alpha^*}{(1-\alpha^*)}$, where 
\begin{align}
	2 \alpha^* = \frac{2 r \alpha^* + r (1-\alpha^*) \rho}{1 - (1-\alpha^*) (1-\rho)}
				+ \frac{ (1-r) \alpha^* \rho}{1 - \alpha^* (1-\rho)}.
\end{align}
When $\rho \rightarrow 0$, $\alpha^* \rightarrow r$, and therefore $\frac{\mathbb{E}[D^{(1)}_R(t)]}{\mathbb{E}[D^{(1)}_B(t)]}
	 \rightarrow \frac{r}{1-r}$.

	For $k\geq 2$, we know from Theorem \ref{kth thm} that, as $t\rightarrow \infty$, 
	\begin{align}
		\lim_{\rho \rightarrow 0} \frac{\mathbb{E}[D^{(k)}_R(t)]}{\mathbb{E}[D^{(k)}_B(t)]}  =\lim_{\rho \rightarrow 0} \frac{\beta_2}{\beta_3} = \lim_{\rho \rightarrow 0} \frac{r\rho}{1-r}\cdot 
		\frac{2(1-\alpha^*(1-\rho))}{2(1-(1-\alpha^*)(1-\rho))}=0.
	\end{align}

\end{proof}

\section{Proof of lemmas}

\subsection{Proof of Lemma \ref{lem-M-order}}

\begin{replemma}{lem-M-order}
	We have that, for $k \geq 1$,
	\begin{align}
		\mathbb{E} [M^{(k)}_{B,**}(t) ] &= O((\log t)^{k-1} t^{2 \beta_2 + 2 \beta_3}), \label{B,**}\\
		\mathbb{E} [M^{(k)}_{B,B*}(t) ] &= \mathbb{E} [M^{(k)}_{B,**}(t) ] (1+o_t(1)), \label{B,B*}\\
		\mathbb{E} [M^{(k)}_{B,*B}(t) ] &= \frac{\beta_3}{\beta_2 + \beta_3} \mathbb{E} [M^{(k)}_{B,**}(t) (1+o_t(1)) ], \label{B,*B}\\
		\mathbb{E} [M^{(k)}_{R,**}(t) ] &= o_t(\mathbb{E} [M^{(k)}_{B,**}(t) ])\label{R,**}.
	\end{align}
\end{replemma}

\begin{proof}

We prove it by induction. Recall that $M^{(1)}_{B,B*}(t) = M^{(1)}_{B,**}(t)$ by definition, and thus (\ref{B,B*}) trivially holds. We first show that the claims hold for $k=1$. First note that $M_{B, **}^{(1)}(t+1)$ arises from $M_{B, **}^{(1)}(t)$ in the following cases:
\begin{enumerate}
	\item A new red node connects with an existing blue node $i$: node $i$ then contributes $\left(d_i^{(1)}(t) + 1\right)^2 - \left(d_i^{(1)}(t)\right)^2 = 1 + 2d_i(t)$ increments. 
	\item A new blue node connects with an existing red node $i$: the new node contributes 1 increment.
	\item A new blue node connects with an existing blue node $i$: node $i$ then contributes $\left(d_i^{(1)}(t) + 1\right)^2 - \left(d_i^{(1)}(t)\right)^2 = 1 + 2d_i(t)$ increments, and the the new node contributes 1 increment.
\end{enumerate}

Therefore,
\begin{align}
	\mathbb{E}[M^{(1)}_{B,**}(t+1) \mid \mathcal{F}_t]
	&= M^{(1)}_{B,**}(t) + 
	\beta_2(t) \frac{D_B^{(1)}(t) + 2 M^{(1)}_{B,**}(t)}{t}\\
	&  + \beta_3(t)\left(1 +\frac{D_B^{(1)}(t) + 2 M^{(1)}_{B,**}(t)}{t}\right) + \beta_4(t)
\end{align}
Taking expectations on both sides gives
\begin{align}
	\mathbb{E}[M^{(1)}_{B,**}(t+1)]
	&= \mathbb{E}[M^{(1)}_{B,**}(t)] + 
	\mathbb{E}\left[\beta_2(t) \frac{D_B^{(1)}(t) + 2 M^{(1)}_{B,**}(t)}{t}\right] \nonumber\\
	&  + \mathbb{E}\left[\beta_3(t)\left(1 +\frac{D_B^{(1)}(t) + 2 M^{(1)}_{B,**}(t)}{t}\right) + \beta_4(t)\right]\\
	&=\mathbb{E}[M^{(1)}_{B,**}(t)] + \frac{2\beta_2 + 2\beta_3}{t} \mathbb{E}[M^{(1)}_{B,**}(t)] \left(1 + o(1)\right)\\
	&= O\left(t^{2\beta_2 + 2\beta_3}\right).
\end{align}
The last step follows by Lemma \ref{lem-recursion-order}.

For (\ref{B,*B}), we can again write a recursive formula:
\begin{align}
	&\mathbb{E}[M^{(1)}_{B,*B}(t+1) \mid \mathcal{F}_t]
	= M^{(1)}_{B,*B}(t)  + \beta_2(t)\frac{M_{B, *B}^{(1)}(t)}{t} \\&+ \beta_3(t)\left(1 + \frac{D_B(t) + M_{B, **}^{(1)}(t) + M_{B, *B}^{(1)}(t)}{t}\right) 
\end{align}
and take the expection:
\begin{align}
	&\mathbb{E}[M^{(1)}_{B,*B}(t+1)]
	= \mathbb{E}\left[ M^{(1)}_{B,*B}(t)  + \beta_2(t)\frac{M_{B, *B}^{(1)}(t)}{t}\right. \\
	&+ \left.\beta_3(t)\left(1 + \frac{D_B(t) + M_{B, **}^{(1)}(t) + M_{B, *B}^{(1)}(t)}{t}\right)\right]\\
	& = \mathbb{E}[M^{(1)}_{B,*B}(t)]  + \left( \frac{\beta_2 + \beta_3}{t} \mathbb{E}[M^{(1)}_{B,*B}(t)] + \frac{\beta_3}{t} \mathbb{E}[M^{(1)}_{B,**}(t)]\right)(1+o(1))\\
	& = \frac{\beta_3}{\beta_2 + \beta_3} \mathbb{E} [M^{(1)}_{B,**}(t) (1+o_t(1)) ].
\end{align}
The last step again follows by Lemma \ref{lem-recursion-order}.

Finally, we repeat the same step as in (\ref{B,**}) for (\ref{R,**}), which gives
\begin{align}
	\mathbb{E}[M^{(1)}_{R,**}(t+1)]
	&=O\left(t^{2\beta_1 + 2\beta_4}\right).
\end{align}
By Lemma \ref{compare_beta}, we know $\mathbb{E} [M^{(k)}_{R,**}(t) ] = o_t(\mathbb{E} [M^{(k)}_{B,**}(t) ])$. We therefore complete the proof of the basis case.

Now assume that the claims hold for $\{2, \ldots, k-1 \}$. The idea is to write out the recursive equation for $M^{(k)}_{B,B*}(t)$, $M^{(k)}_{B,R*}(t)$,$M^{(k)}_{B,*B}(t)$, $M^{(k)}_{B,*R}(t)$, according to the 4 possible cases at each step. 

For $M^{(k)}_{B,B*}(t)$, notice that $M^{(k)}_{B,B*}(t+1)$ can arise from $M^{(k)}_{B,B*}(t)$ in the following cases:
\begin{enumerate}
	\item A new blue node connects with an existing red node $i$: then the new node contributes $1\cdot d_{i, B^*}^{(k-1)}(t)$ increments.
	\item A new blue node connects with an existing blue node $i$: the new node contributes $1\cdot d_{i, B^*}^{(k-1)}(t)$ increments, node $i$ contributes $1\cdot d_{i, B^*}^{(k)}(t)$ increments, and each blue node $j$ with $dist_{i,j}(t)= k-1$ contributes $d_j(t)$ increments, which sum up to $d_{i, B^*}^{(k)}(t)$ in total.
	\item A new red node connects with an existing blue node $i$: node $i$ contributes $1\cdot d_{i, B^*}^{(k)}(t)$ increments, and blue nodes that are $(k-1)$ away from $i$ contribute $d_j(t)$ increments in total.
	\end{enumerate}
Therefore,
\begin{align}
	\label{eq-recur-B-B*-1}
	&\mathbb{E}[M^{(k)}_{B,B*}(t+1) \mid \mathcal{F}_t]
	= M^{(k)}_{B,B*}(t) + 
	\beta_2(t) \frac{2 M^{(k)}_{B,B*}(t) + M^{(k-1)}_{B,*B}(t)}{t} \\
	& + \beta_3(t) \frac{2 M^{(k)}_{B,B*}(t) + M^{(k-1)}_{B,*B}(t) + M^{(k-1)}_{B,B*}(t)}{t}
	+ \beta_4(t) \frac{M^{(k-1)}_{R,B*}(t)}{t}.
\end{align}

Now, taking expectations on both sides of (\ref{eq-recur-B-B*-1}), we get
\begin{align}
	\label{eq-recur-B-B*-2}
	&\mathbb{E}[M^{(k)}_{B,B*}(t+1)]
	= \mathbb{E}[ M^{(k)}_{B,B*}(t)] + 
	\mathbb{E}\left[\beta_2(t) \frac{2 M^{(k)}_{B,B*}(t) + M^{(k-1)}_{B,*B}(t)}{t}\right] \nonumber \\
	& + \mathbb{E}\left[\beta_3(t) \frac{2 M^{(k)}_{B,B*}(t) + M^{(k-1)}_{B,*B}(t) + M^{(k-1)}_{B,B*}(t)}{t}\right]
	+ \mathbb{E}\left[\beta_4(t) \frac{M^{(k-1)}_{R,B*}(t)}{t}\right] \\
	&= \mathbb{E} [M^{(k)}_{B,B*}(t)] + (2 \beta_2 + 2 \beta_3) (1+o_t(t^{-1/4})) \frac{\mathbb{E} [M^{(k)}_{B,B*}(t)]}{t}\nonumber \\
		&+ 2 \beta_3 \mathbb{E} [M^{(k-1)}_{B,**}(t)] (1+o_t(1)),
\end{align}
where in the last step we used Lemma \ref{lem_exp_ratio} and the induction assumptions that
\begin{align}
		\mathbb{E} [M^{(k-1)}_{B,B*}(t)] &= \mathbb{E} [M^{(k-1)}_{B,**}] (1+o_t(1)),\\
		\mathbb{E} [M^{(k-1)}_{B,*B}(t)] &= \frac{\beta_3}{\beta_2 + \beta_3} \mathbb{E} [M^{(k-1)}_{B,**}] (1+o_t(1)),\\
		\mathbb{E} [M^{(k-1)}_{R,**}(t)] &= o_t(\mathbb{E} [M^{(k-1)}_{B,**}(t)]).
\end{align}

Note that by induction we also have
$$\mathbb{E} [M^{(k-1)}_{B,**}(t)] = O((\log t)^{k-1} t^{2 \beta_2 + 2 \beta_3}).$$
Applying Lemma \ref{lem-recursion-order} on $\mathbb{E} [M^{(k)}_{B,B*}(t)]$, we are in the cast that $c_1 = c_3 = 2 \beta_2 + 2 \beta_3$ and $m = k-1$, and thus we see that
\begin{equation}
	\mathbb{E} [M^{(k)}_{B,B*}(t)]= O((\log t)^{k} t^{2 \beta_2 + 2 \beta_3}).
\end{equation}

We can then apply the same method on $M^{(k)}_{B,R*}(t)$,$M^{(k)}_{B,*B}(t)$, $M^{(k)}_{B,*R}(t)$, and get the desired result.
\end{proof}

\begin{lemma}\label{compare_beta}
Let $\beta_1(t), \beta_2(t),\beta_3(t), \beta_4(t)$ defined as in the previous proof, and $\beta_i:= \lim_{t\rightarrow \infty} \beta_i(t)$. Then
\begin{equation}
\beta_1 + \beta_4< \beta_2 + \beta_3.
\end{equation}

\begin{proof}
We first simplify the expression by defining $g(x,y):= \frac{x}{1-(1-x)(1-y)}$, then $\beta_1 = g(\alpha^*,\rho)$, $\beta_2 = 1- g(\alpha^*,\rho)$, $\beta_3 = g(1-\alpha^*,\rho)$, and $\beta_r = 1-g(1-\alpha^*,\rho)$. Furthermore, we define $h(x,y):= \frac{g(x,y)}{x}$. Then $\frac{1-(g(x,y)}{1-x} = y\cdot h(x,y)$. Now,
\begin{equation}
\lim_{t\rightarrow \infty} \frac{\beta_1(t) - \beta_2(t) }{\beta_3(t)  - \beta_4(t) } = \frac{\frac{r}{2}(1-\rho)h\left(\alpha^*,\rho\right)}{\frac{1-r}{2}(1-\rho)h\left(1-\alpha^*,\rho\right)} = \frac{rh\left( \alpha^*,\rho\right)}{(1-r)h\left(1- \alpha^*,\rho\right)}.
\end{equation}
It suffices to show that the above quantity is smaller than 1. Theorem 4.6 in \cite{avin2015homophily} has shown that as $t\rightarrow \infty$,
\begin{align*}
2 \alpha^* - 1 &= rg\left( \alpha^* , \rho\right)- (1-r)g\left(1-\alpha^*, \rho\right)\\ \alpha^* - (1-\alpha^*) &= rh\left( \alpha^*,\rho\right) \alpha^* - (1-r)h(1- \alpha^*,\rho)\left(1- \alpha^*\right)\\
\alpha^*\left(1- rh(\alpha^*,\rho)\right)& = (1- \alpha^*)\left(1-(1-r)h\left(1- \alpha^*,\rho\right)\right)
\end{align*}
By Part 5 of Theorem 4.4, $\alpha^* < 1/2$. Therefore,
\begin{equation}
1- rh\left( \alpha^*,\rho\right)>1-(1-r)h\left(1- \alpha^*,\rho\right).
\end{equation}
Thus, $rh\left( \alpha^*,\rho\right) < (1-r)h\left(1- \alpha^*,\rho\right)$ and $\beta_1-\beta_2 < \beta_3 - \beta_4$ as $t\rightarrow \infty$.
\end{proof}
\end{lemma}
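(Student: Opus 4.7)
The plan is to reduce the four-term inequality to a two-term comparison and then extract the required ordering from the fixed-point equation defining $\alpha^*$. A direct inspection of the definitions of $\beta_1,\ldots,\beta_4$ given in the previous proof shows that $\beta_2 = \rho\,\beta_1$ and $\beta_4 = \rho\,\beta_3$, because each pair shares the same denominator and the numerators differ only by a factor of $\rho$. Hence
\begin{equation*}
(\beta_2+\beta_3) - (\beta_1+\beta_4) = (1-\rho)(\beta_3-\beta_1),
\end{equation*}
so assuming $\rho<1$ the lemma collapses to the single inequality $\beta_1<\beta_3$. Introducing the shorthand $h(x,\rho) := 1/\bigl(1-(1-x)(1-\rho)\bigr)$, we can write $\beta_1 = \tfrac{r}{2}\,h(\alpha^*,\rho)$ and $\beta_3 = \tfrac{1-r}{2}\,h(1-\alpha^*,\rho)$, so the target becomes
\begin{equation*}
r\,h(\alpha^*,\rho) \;<\; (1-r)\,h(1-\alpha^*,\rho).
\end{equation*}

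The next step is to exploit the defining equation (\ref{alpha_eqq}) for $\alpha^*$. Rewriting each term with $g(x,\rho)=x\,h(x,\rho)$ and then subtracting $1$ from both sides puts the equation in the symmetric form
\begin{equation*}
2\alpha^*-1 = r\,\alpha^*\,h(\alpha^*,\rho) - (1-r)(1-\alpha^*)\,h(1-\alpha^*,\rho),
\end{equation*}
which rearranges cleanly as
\begin{equation*}
\alpha^*\bigl(1 - r\,h(\alpha^*,\rho)\bigr) = (1-\alpha^*)\bigl(1 - (1-r)\,h(1-\alpha^*,\rho)\bigr).
\end{equation*}
Because $r<1/2$ forces $\alpha^*<1/2$ (the glass-ceiling bound from Avin et al., Theorem 4.4, Part 5), we have $\alpha^*<1-\alpha^*$. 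Provided both bracketed factors are strictly positive, the identity then forces $1 - r\,h(\alpha^*,\rho) > 1 - (1-r)\,h(1-\alpha^*,\rho)$, which is exactly the reduced inequality.

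The hard part will be justifying the positivity of the two bracketed factors, since the cross-ratio argument becomes vacuous or reverses if either factor vanishes or is negative. I would dispatch this by contradiction: if $1 - r\,h(\alpha^*,\rho)\le 0$, the identity forces the right-hand factor to be nonpositive as well, and unpacking the definition of $h$ together with $r<1/2$ and $\rho\in(0,1)$ yields either $\alpha^*\in\{0,1\}$ or $r\ge 1$, both impossible given the hypotheses. The boundary case $\rho=1$ should be acknowledged separately: then $\beta_2=\beta_1$ and $\beta_4=\beta_3$, so the two sums coincide and the lemma's strict inequality implicitly presumes $\rho<1$; at the other extreme $\rho\to 0$, the factor $(1-\rho)$ survives and the inequality remains strict as soon as $\beta_3>\beta_1$, which my argument provides.
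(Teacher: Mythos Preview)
Your proposal is correct and follows essentially the same route as the paper's proof: both reduce the inequality to $r\,h(\alpha^*,\rho) < (1-r)\,h(1-\alpha^*,\rho)$, rewrite the fixed-point equation~(\ref{alpha_eqq}) as $\alpha^*\bigl(1 - r\,h(\alpha^*,\rho)\bigr) = (1-\alpha^*)\bigl(1 - (1-r)\,h(1-\alpha^*,\rho)\bigr)$, and conclude via $\alpha^* < 1/2$. Your factorization $\beta_2 = \rho\beta_1$, $\beta_4 = \rho\beta_3$ is a slightly cleaner path to the reduced inequality than the paper's ratio $(\beta_1-\beta_2)/(\beta_3-\beta_4)$, and you are more explicit about the positivity of the bracketed factors and the boundary cases $\rho\in\{0,1\}$, which the paper leaves implicit.
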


\subsection{Proof of Lemma \ref{lem_exp_ratio}}

\begin{replemma}{lem_exp_ratio}
	For any color $C \in \{R, B\}$, any $i \in \{1,2,3,4\}$ and any $k$, we have that
	\begin{equation}
		\frac{\mathbb{E}\left[ \beta_i(t) M^{(k-1)}_{C,**}(t) \right]}
		{\beta_i \mathbb{E}\left[ M^{(k-1)}_{C,**}(t) \right]}
		= 1 + o_t(t^{-1/4}).
	\end{equation}
\end{replemma}
\begin{proof}
	From \cite{avin2015homophily}[Corollary 4.11], denoting by 
	\begin{equation}\sigma_t = \max(2 (\log t) t^{-1/2}, t^{-1/3}),
	\end{equation}	
 we have that
	\begin{equation}
		\mathbb{P}\left( |\alpha(t) - \alpha^*| > \sigma_t \right)
		< t^{-4}.
	\end{equation}
	Based on the above result, we can check that there exists a constant $C_1 > 0$, such that for any $i \in \{1,2,3,4\}$,
	\begin{equation}
		\mathbb{P}\left( |\beta_i(t) - \beta_i| > C_1 \sigma_t \right)
		< t^{-4}.
	\end{equation}
	Note that by definition, $M^{(k-1)}_{C,**}(t)$ is upper bounded by $t^3$; and there exists a constant $C_2 > 0$ such that $\beta_i(t), \beta_i < C_2$ for any $i,t$. We have
	\begin{align}
	&	\left \vert \mathbb{E}\left[ (\beta_i(t)-\beta_i) M^{(k-1)}_{C,**}(t) \right]
		\right \vert
	\leq 
		\mathbb{E}\left[ \vert \beta_i(t)-\beta_i \vert M^{(k-1)}_{C,**}(t) \right] \\
	&\leq \mathbb{E}\left[ \vert \beta_i(t)-\beta_i \vert M^{(k-1)}_{C,**}(t) \cdot 1_{|\beta_i(t) - \beta_i| > C_1 \sigma_t} \right] \nonumber \\
		&+
		\mathbb{E}\left[ \vert \beta_i(t)-\beta_i \vert M^{(k-1)}_{C,**}(t)\cdot 1_{|\beta_i(t) - \beta_i| \leq C_1 \sigma_t} \right] \\
	& \leq C_2 t^3 \cdot t^{-4} + C_1 \sigma_t \mathbb{E}\left[ M^{(k-1)}_{C,**}(t) \right].
	\end{align}
	The above bound implies that for some $C_2 > 0$
	\begin{equation}
		\frac{\left \vert \mathbb{E}\left[ (\beta_i(t)-\beta_i) M^{(k-1)}_{C,**}(t) \right]
		\right \vert}
		{\beta_i \mathbb{E}\left[ M^{(k-1)}_{C,**}(t) \right]}
		\leq C_2 \sigma_t = o_t(t^{-1/4}),
	\end{equation}
	where $-1/4$ is just an arbitrary number between $-1/3$ and $0$. The proof is finished.
\end{proof}

\subsection{Proof of Lemma \ref{lem-recursion-order}}

\begin{replemma}{lem-recursion-order}
	For a sequence $\{a_t\}$ satisfying
	\begin{equation}
		a_{t+1} = a_t + c_1 d_t \frac{a_t}{t} + c_2 b_t, 
	\end{equation}
	where $d_t = (1+o_t(t^{-\epsilon}))$ for some $\epsilon > 0$, $b_t = (\log t)^m t^{c_3 - 1} (1+o_t(1))$. 
	We then have that
	\begin{equation}
		a_t = 
		\begin{cases}
			O(t^{c_1}), &\text{if $c_1 > c_3$} \\
			O((\log t)^{m+1} t^{c_3}), &\text{if $c_1 = c_3$} \\
			\frac{c_2}{c_3-c_1} (\log t)^m t^{c_3} (1+o_t(1)), &if c_1 < c_3.
		\end{cases}
	\end{equation}
\end{replemma}

\begin{proof}
	By the recursive equation for $a_t$, we see that
	\begin{align}
		a_t = \sum_{s=1}^{t-1} c_2 b_s \prod_{j=s+1}^{t-1} (1+c_1d_j/j).
	\end{align}
	We can find a constant $C_0 > 0$, such that as $c_1d_j/j < 1/2$, 
	$$\vert \log(1+c_1d_j/j) - c_1d_j/j \vert < C_0 (c_1d_j/j)^2,$$
	and thus
	\begin{align}
		& \left \vert \sum_{j=s+1}^{t-1} \log  (1+c_1d_j/j)
		-  \sum_{j=s+1}^{t-1} c_1/j \right \vert \\
		\leq & \left \vert \sum_{j=s+1}^{t-1} c_1 (1-d_j)/j \right \vert
		+ \left \vert \sum_{j=s+1}^{t-1} C_0 (c_1d_j/j)^2 \right \vert.
	\end{align}
	By the assumption that $d_t = (1+o_t(t^{-\epsilon}))$, it is easy to check that 
	\begin{align}
		&\left \vert \sum_{j=s+1}^{t-1} c_1 (1-d_j)/j \right \vert
		+ \left \vert \sum_{j=s+1}^{t-1} C_0 (c_1d_j/j)^2 \right \vert\\
		\leq &
		\left \vert \sum_{j=s+1}^{\infty} c_1 o_j(j^{-1-\epsilon}) \right \vert
		+ \left \vert \sum_{j=s+1}^{\infty} C_0 (c_1/j)^2 (1+o_j(1)) \right \vert
		 = o_s(1).
	\end{align}
	Also, by basic mathematical analysis we have that
	\begin{equation}
		\left \vert \sum_{j=s+1}^{t-1} c_1/j - c_1(\log(t) - \log(s))\right \vert
		= o_s(1).
	\end{equation}
	Hence we see that
	
	\begin{align}
		& \prod_{j=s+1}^{t-1} (1+c_1d_j/j) 
		= \exp\left( \sum_{j=s+1}^{t-1} \log  (1+c_1d_j/j) \right) \\
		& = \exp\left( \sum_{j=s+1}^{t-1} c_1/j + o_s(1) \right)
		= (t/s)^{c_1}(1+o_s(1)).
	\end{align}
	
	Combining with the expression of $b_t$, we get
	\begin{align}
		a_t & = \sum_{s=1}^{t-1} c_2 (\log s)^m s^{c_3 - 1} (t/s)^{c_1} (1+o_s(1)) \\
		&	= t^{c_1} \sum_{s=1}^{t-1} c_2 (\log s)^m s^{c_3 - c_1 - 1} (1+o_s(1)).
	\end{align}
	We have that
	\begin{align}
	&	\sum_{s=1}^{t-1} c_2 (\log s)^m s^{c_3 - c_1 - 1} (1+o_s(1)) \\
		= &
		\begin{cases}
			O(1), &\text{if $c_1 > c_3$} \\
			O((\log t)^{m+1}), &\text{if $c_1 = c_3$} \\
			\frac{c_2}{c_3 - c_1} (\log t)^m t^{c_3 - c_1}(1+o_t(1)), &if c_1 < c_3,
		\end{cases}
	\end{align}
	which finishes the proof.
\end{proof}

\end{document}